\theoremstyle{definition}
\newtheorem{proposition}{Proposition}
\newtheorem{remark}{Remark}
\begin{document}
\title{Blind MIMO Semantic Communication via \\Parallel Variational Diffusion: \\A Completely Pilot-Free Approach}

\author{Hao Jiang,
Xiaojun Yuan, \textit{Senior Member}, \textit{IEEE}, Yinuo Huang, and Qinghua Guo, \textit{Senior Member}, \textit{IEEE}
\thanks{

H. Jiang, X. Yuan, and Y. Huang are with the National Key Laboratory of Wireless Communications, University of Electronic Science and Technology of China, Chengdu 611731, China (email: jh@std.uestc.edu.cn; xjyuan@uestc.edu.cn; ynhuang@std.uestc.edu.cn). Q. Guo is with the School of Electrical, Computer and Telecommunications Engineering, University of Wollongong, Wollongong, NSW 2522, Australia (email: qguo@uow.edu.au). The work in this paper has been partially presented in IEEE/CIC International Conference on Communications in China (ICCC) 2025 \cite{jiang2025blind}.
}
}


\maketitle

\begin{abstract}

In this paper, we propose a novel blind multi-input multi-output (MIMO) semantic communication (SC) framework named Blind-MIMOSC that consists of a deep joint source-channel coding (DJSCC) transmitter and a diffusion-based blind receiver. 
The DJSCC transmitter aims to compress and map the source data into the transmitted signal by exploiting the structural characteristics of the source data, while the diffusion-based blind receiver employs a parallel variational diffusion (PVD) model to simultaneously recover the channel and the source data from the received signal without using any pilots. The PVD model leverages two pre-trained score networks to characterize the prior information of the channel and the source data, operating in a plug-and-play manner during inference. This design allows only the affected network to be retrained when channel conditions or source datasets change, avoiding the complicated full-network retraining required by end-to-end methods.
This work presents the first fully pilot-free solution for joint channel estimation and source recovery in block-fading MIMO systems.
Extensive experiments show that Blind-MIMOSC with PVD achieves superior channel and source recovery accuracy compared to state-of-the-art approaches, with drastically reduced channel bandwidth ratio.

\end{abstract}

\begin{IEEEkeywords}
Blind MIMO communication, semantic communication, joint channel-and-source recovery, diffusion model.

\end{IEEEkeywords}

\section{Introduction}

With the evolution of wireless communications towards the sixth-generation (6G) era, meeting the demands for massive data transmission and ultra-high spectral efficiency has emerged as a critical challenge in communication system design. By deploying multiple antennas at both transmitter and receiver, the multi-input multi-output (MIMO) technology exploits spatial diversity and multiplexing capabilities to substantially enhance the spectral efficiency of the system \cite{larsson2014massive}. In coherent MIMO systems, accurate acquisition of channel state information (CSI) is crucial because the quality of CSI directly impacts the performance of signal detection and decoding \cite{tse2005fundamentals}. As MIMO systems scale up with more and more antennas \cite{rusek2012scaling}, the number of channel parameters to be estimated increases significantly, rendering accurate CSI acquisition increasingly difficult and overhead-intensive. Specifically, since the channel coherence time is limited physically, the rising cost for CSI acquisition proportionally reduces the resource available for data transmission, thereby decreasing the spectral efficiency.

Existing MIMO communication techniques mainly fall into two main categories, namely, the pilot‑based approaches and the blind detection approaches. For the pilot-based approaches \cite{hassibi2003how,coldrey2007trainingbased,chi2011training,yuan2018fundamental}, dedicated pilot symbols are inserted into each transmission block (typically with the duration shorter than the channel coherence time). At the receiver, the channel is first estimated based on the known pilots, and the estimated channel is then used for subsequent signal detection and decoding. The amount of time-frequency resource necessarily used for pilot transmission (referred to as pilot overhead) increases linearly with the input size of the MIMO channel, which may significantly lower the spectral efficiency of the system. 
To reduce pilot overhead, compressed sensing has been used to reduce the number of required pilots by exploiting the channel sparsity in a certain transformed domain, such as the angular domain \cite{bajwa2010compressed,kuai2019structured}. 
Specifically, the channel is assumed to have a sparse representation on a discrete grid in the considered transformed domain. However, the true path parameters rarely coincide exactly with the predetermined grid points in practice, causing the renowned energy leakage problem that severely degrades the channel estimation accuracy \cite{he2019super}. To mitigate the grid mismatch, the super‐resolution approach proposed in \cite{he2019super} treats the grid points as learnable parameters and jointly optimizes both the sparse coefficients and the continuous grid points, thereby significantly enhancing channel estimation accuracy.
More recently, with the emergence of deep neural networks (DNNs), DNN-based methods have been developed for pilot design and channel estimation \cite{soltani2019deep,liu2024learning}. By learning the correlations of the channel across different transformed domains, these methods further reduce pilot overhead while enhancing channel estimation accuracy. Nevertheless, in all these pilot-based approaches, the pilot overhead is proportional to the number of transmit antennas.
As the number of MIMO antennas grows, the pilot overhead increases sharply, and the coherence time may be too short even to accommodate the necessary amount of pilots \cite{rusek2012scaling}.

To avoid the above difficulty, blind detection \cite{jiang2024generalized,zhang2018blind,dean2018blind,ding2019sparsity,yuan2023blind,schmid2025blind,liu2025blind} has been developed to estimate the channel and detect the transmitted signal simultaneously from the received signal without the aid of pilots, followed by channel and source decoding. The existing blind detection methods usually rely on bilinear factorization of the received signal to jointly estimate the channel and the transmitted signal. The main difficulty for blind detection is the non-uniqueness of the bilinear factorization of the received signal. Additional structural constraints are introduced to enforce the uniqueness of the bilinear factorization.
For example, the sparsity of channel coefficients (in a certain transformed domain) \cite{zhang2018blind,yuan2023blind} is exploited to ensure the uniqueness of sparse factorization up to permutation and phase ambiguities. These ambiguities exist because permuted and phase-shifted solutions are statistically indistinguishable under the routine assumption that the channel coefficients and the transmitted symbols are respectively independently and identically distributed (i.i.d.).
Alternatively, by introducing an additional sparsity constraint on the transmitted symbols \cite{ding2019sparsity}, the signal sparsity is exploited to ensure the uniqueness of sparse factorization up to the ambiguities.
These ambiguities can be further resolved by inserting additional reference symbols known to the receiver
into the transmitted signal \cite{zhang2018blind}. Yet, with this treatment, the blind detection schemes in \cite{zhang2018blind,dean2018blind,ding2019sparsity,yuan2023blind,schmid2025blind,liu2025blind} are not pilot-free in a strict sense. Moreover, due to the data processing inequality \cite{tse2005fundamentals},
the blind detection schemes in \cite{zhang2018blind,dean2018blind,ding2019sparsity,yuan2023blind,schmid2025blind,liu2025blind} 
generally suffer from performance loss due to the separation of the detection and decoding operations.

To break through the limitations of the existing pilot-based and blind detection approaches, we propose a novel blind MIMO semantic communication (Blind-MIMOSC) framework that enables joint recovery of the channel and the source data from the received signal without using any pilots. The proposed Blind-MIMOSC consists of a deep joint source-channel coding (DJSCC) transmitter and a diffusion-based blind receiver. The DJSCC transmitter aims to compress and map the source data into channel input symbols by exploiting the structural characteristics of the source data (such as the texture and contours of an image \cite{dong2015image} and the temporal correlations between frames in a video \cite{xue2021denoising}). The blind receiver simultaneously recovers the channel and the source data from the received signal without the aid of any pilots.

In the proposed transceiver design, we aim to minimize a variational deputy of the joint entropy of the channel and the source conditioned on the received signal, thereby reducing the uncertainty of both the channel and the source data given the received signal. For the transmitter design, since the CSI is unknown to our blind transceiver, we adopt established DJSCC codec training methodologies to train the DJSCC encoder without incorporating channel information.
For the blind receiver design, motivated by the recent advances in large artificial intelligence (AI) models such as score-based diffusion models \cite{song2021scorebased}, we propose a novel parallel variational diffusion (PVD) model, in which two pre-trained score networks are utilized to individually characterize the prior information of the channel and the source. 
Unlike conventional model-based methods that rely on explicit priors introduced by over-simplified assumptions (e.g., sparse signaling), the score networks can learn and exploit complex structural characteristics inherent in the channel and the source data, thereby significantly enhancing the recovery performance.

The proposed PVD is distinct from the conventional diffusion models in the following aspects. First, in PVD, two parallel reverse diffusion processes are employed to simultaneously recover both the channel and the source data. Second, PVD directly optimizes the variational means of the channel and the source data at each reverse diffusion step via gradient descent, instead of relying solely on heuristic stochastic sampling as in conventional diffusion models \cite{chung2023diffusion}. Third, the likelihood scores involved in the diffusion model are dependent on the nonlinear encoder and the bilinear channel model, which is difficult to calculate exactly. To achieve a more accurate approximation, we employ second-order score networks to take into account the errors of the channel and source estimates.


We conduct extensive numerical simulations to validate the performance of the proposed Blind-MIMOSC scheme. Our experimental setup includes both the Rayleigh fading channel model and the standardized 3GPP TR 38.901 CDL-C channel model.
The source data are images sampled from the FFHQ-256 validation set \cite{karras2019style}. We show that our Blind-MIMOSC with PVD successfully achieves joint channel-and-source recovery in a completely pilot-free manner, and significantly outperforms the state-of-the-art pilot-based DJSCC-MIMO algorithm \cite{wu2024deep} across various performance metrics.
Specifically, Blind-MIMOSC with PVD reduces the distortion and improves the perceptual quality of the recovered source data in terms of metrics such as the multi-scale structural similarity index measure (MS-SSIM) \cite{wang2003multiscale}, the deep image structure and texture similarity (DISTS) \cite{ding2020image}, and the learned perceptual image patch similarity (LPIPS) \cite{Zhang_2018_CVPR}.
The channel estimation accuracy of Blind-MIMOSC, quantified by the normalized mean-squared error (NMSE), is close to the oracle bound that assumes perfect knowledge of the transmitted signal in channel estimation.
Notably, these improvements are achieved with a drastically lower channel bandwidth ratio (CBR, defined as the ratio of the channel input symbol count to the source data dimension), which can be as low as 1/3 of the CBR of the pilot-based DJSCC-MIMO algorithm \cite{wu2024deep} in our considered simulation settings.

\subsection{Contributions}
The contributions of this work are summarized as follows:
\begin{itemize}
    \item We establish a novel blind MIMO semantic communication framework termed Blind-MIMOSC, which comprises a DJSCC transmitter and a diffusion-based blind receiver that, without the use of any pilots, simultaneously recovers both the channel and the source data from the received signal. Blind-MIMOSC is designed to minimize the joint entropy of the channel and source data conditioned on the received signal, which is in line with the information-maximization principle \cite{cover2012elements}. Interestingly, we show that both the pilot-based and blind detection approaches allow for a unified interpretation from the perspective of information maximization, except for the introduction of additional assumptions for simplified separate signal processing. In this regard, these existing approaches perform much worse than the proposed Blind-MIMOSC scheme since the latter conducts joint channel estimation and source decoding.

    \item For the blind receiver, we develop a novel PVD model for the simultaneous recovery of the channel and the source data. The proposed blind receiver employs individual pre-trained score networks to characterize the prior information of the channel and the source data, rather than imposing simplified assumptions such as the i.i.d. assumption \cite{zhang2018blind,dean2018blind,ding2019sparsity,yuan2023blind,schmid2025blind,liu2025blind}, thus achieving completely pilot-free joint channel-and-source recovery. Note that these two score networks are used in the inference process in a plug-and-play manner. As such, whenever the channel condition or the source data distribution varies, only the corresponding score network needs to be retrained, as in contrast to the complicated retraining of the whole neural network in end-to-end methods \cite{wu2024deep,cai2025end}.
    
    \item To the best of our knowledge, this work presents the \textit{first practical}, \textit{completely pilot-free} solution for joint channel estimation and source recovery in block-fading coherent MIMO systems.\footnote{Note that in noncoherent MIMO systems, pilot signals can be omitted since explicit CSI is not required for symbol detection.} 
    Extensive experiments validate the superior performance of the proposed approach for image transmission over different channel conditions. 
    Blind-MIMOSC with PVD significantly outperforms state-of-the-art pilot-based and blind schemes in terms of a variety of performance metrics, such as MS-SSIM \cite{wang2003multiscale}, DISTS \cite{ding2020image}, and LPIPS \cite{Zhang_2018_CVPR} of the recovered source data, and the NMSE of the channel estimation. Note that these improvements are achieved with a drastically lower CBR.

\end{itemize}

\subsection{Related Work}

In this subsection, we review prior work on the design of semantic communication systems.
With the advent of DNNs, end-to-end optimized DJSCC for data transmission has emerged as an active research area in semantic communications \cite{bourtsoulatze2019deep,yang2024swinjscc,wu2024deep}. Bourtsoulatze \emph{et al.} \cite{bourtsoulatze2019deep} proposed the convolutional autoencoder–based JSCC scheme for single-input single-output (SISO) scenarios, which outperforms classical separate source-channel coding schemes at relatively low signal-to-noise ratio (SNR). 
Later, Yang \emph{et al.} \cite{yang2024swinjscc} proposed SwinJSCC for SISO systems, which employs a Swin Transformer backbone to capture both local and global source features. 
Experimental results show that SwinJSCC achieves higher recovery accuracy than \cite{bourtsoulatze2019deep} with reduced end-to-end latency. More recently, Wu \emph{et al.} \cite{wu2024deep} developed the DJSCC-MIMO algorithm for MIMO semantic communication systems. By leveraging a Vision Transformer architecture, DJSCC-MIMO jointly exploits semantic features and channel conditions via self-attention, yielding improvements in source recovery accuracy and robustness to channel estimation errors across different SNR settings. 
However, among the schemes above, the variation of either the source data distribution or the channel distribution necessitates end-to-end retraining, resulting in a large training overhead for practical deployment. 

With recent advances in large generative AI models, score-based diffusion models \cite{song2019generative, song2021scorebased} have emerged as powerful tools for tasks that demand fine-grained control, such as audio and image synthesis, and are therefore well suited for source data recovery in semantic communication systems. In prior work, diffusion models have been adopted for semantic decoding \cite{grassucci2023generative,wang2025diffcom}. 
Specifically, diffusion posterior sampling (DPS) \cite{chung2023diffusion}, initially developed for general noisy inverse problems, has been extended to source data reconstruction in \cite{wang2025diffcom}. This scheme first performs pilot-based channel estimation and then recovers source data via DPS based on the received signal and the estimated channel. Experimental results demonstrate that this DPS-based approach offers superior source reconstruction quality in terms of a variety of metrics for SISO systems. 
Nevertheless, DPS-based approaches rely on heuristic stochastic sampling, and thus often encounter failures among the posterior samples due to the randomness, particularly in the blind MIMO semantic communication scenarios with unknown channel, nonlinear encoding, and channel mixing.
To address this limitation, our proposed PVD model directly optimizes the variational means of the channel and the source data through deterministic gradient descent across reverse diffusion steps. This deterministic approach significantly reduces sampling randomness, enabling more stable and consistent source data reconstructions compared to the DPS-based approaches.

\subsection{Organization and Notations}
The remainder of this paper is organized as follows. Section II introduces the system model, formulates the problem, and proposes the Blind-MIMOSC framework. Section III details the blind receiver design via the PVD model. Section IV introduces the extension of the Blind-MIMOSC framework and PVD model to the multi-user case. Section V presents numerical results and comparisons with state-of-the-art baselines. Finally, Section VI concludes the paper.

\textit{Notations}: Throughout this manuscript, we use standard mathematical notations and conventions. Specifically, we denote the set of complex numbers by $\mathbb{C}$, the set of real numbers by $\mathbb{R}$, and we employ regular lowercase letters, bold lowercase letters, and bold uppercase letters to represent scalars, vectors, and matrices, respectively. 
Transpose and conjugate transpose operations are denoted by $(\cdot)^\top$ and $(\cdot)^\mathrm{H}$, respectively. The Frobenius norm is denoted by $\Vert\cdot \Vert_F$, and $\propto$ denotes equality up to a constant multiplicative factor. We denote a multivariate normal distribution as $\mathcal{N}(\cdot; \boldsymbol{M}, \sigma^2)$, where $\boldsymbol{M}$ is the mean and all elements are mutually independent with equal variance $\sigma^2$. $\mathcal{CN}(\cdot; \boldsymbol{M}, \sigma^2)$ denotes a circularly symmetric complex normal distribution with mean $\boldsymbol{M}$ and scalar variance $\sigma^2$. We define $H(\boldsymbol{X}_1)$ for the entropy of a random variable $\boldsymbol{X}_1$, $I(\boldsymbol{X}_1;\boldsymbol{X}_2)$ for the mutual information of $\boldsymbol{X}_1$ and $\boldsymbol{X}_2$, $\operatorname{diag}[\boldsymbol{X}_1,\cdots,\boldsymbol{X}_N]$ for a bock-diagonal matrix with matrices $\left\{\boldsymbol{X}_n\right\}^{N}_{n=1}$ on its diagonal, and $\mathbb{E}_{p(x)}[\cdot]$ for the expectation over the distribution $p(x)$.

\section{Blind MIMO Semantic Communication Framework}\label{ProbForm}
\subsection{Channel Model}
Consider a block-fading MIMO system where a transmitter equipped with $N_t$ antennas communicates to a receiver with $N_r$ antennas. The MIMO channel remains static within each transmission block and undergoes independent fading across different blocks. Without loss of generality, we assume that each block can be further divided into $T$ time slots. The received signal at the receiver in the $k$-th transmission block is represented by
\begin{align}\label{systemmodel0}
    \boldsymbol{Y}_{k} &= \tilde{\boldsymbol{H}}_k\boldsymbol{X}_{k} + \boldsymbol{N}_{k},
\end{align}
where $k\in\{1, \dots, K\}$ with $K$ being the number of transmission blocks, $\tilde{\boldsymbol{H}}_k \in\mathbb{C}^{N_r\times N_t}$ is the channel between the transmitter and the receiver in the $k$-th transmission block, $\boldsymbol{X}_k\in \mathbb{C}^{N_t \times T}$ is the transmitted signal in the $k$-th transmission block, and $\boldsymbol{N}_k\in \mathbb{C}^{N_r \times T}$ is the corresponding additive white Gaussian noise (AWGN) matrix with elements drawn from the circularly symmetric complex Gaussian distribution $\mathcal{CN}(0, \sigma_n^2)$ with $\sigma_n^2$ being the noise power.
The received signal across all $K$ blocks can be expressed as
\begin{align}\label{systemmodel1}
    \boldsymbol{Y} ={}& [ \boldsymbol{Y}^{\top}_{1},\boldsymbol{Y}^{\top}_{2},\cdots ,\boldsymbol{Y}^{\top}_{K} ]^{\top}
    =\boldsymbol{H}_{0}\boldsymbol{X} + \boldsymbol{N},
\end{align}
where $\boldsymbol{H}_0 = \operatorname{diag}[\tilde{\boldsymbol{H}}_1, \tilde{\boldsymbol{H}}_2, \cdots, \tilde{\boldsymbol{H}}_K]\in \mathbb{C}^{N_rK \times N_tK}$ is the compound block-diagonal MIMO channel matrix, $\boldsymbol{X} = [ \boldsymbol{X}^{\top}_{1},\boldsymbol{X}^{\top}_{2},\cdots ,\boldsymbol{X}^{\top}_{K}]^{\top}\in\mathbb{C}^{N_tK\times T}$ is the transmitted signal, and $\boldsymbol{N} = [ \boldsymbol{N}^{\top}_{1},\boldsymbol{N}^{\top}_{2},\cdots ,\boldsymbol{N}^{\top}_{K} ]^{\top}\in \mathbb{C}^{N_rK \times T}$ is the AWGN across all the blocks.

\subsection{Transceiver Structure}\label{TransceiverFramework}
We consider a blind transceiver scheme over the block-fading MIMO channel in \eqref{systemmodel1}. The terminology ``blind'' means that: 1) the channel $\boldsymbol{H}_0$ is unknown to both the transmitter and the receiver; and 2) there is no pilot symbol inserted into the transmitted signal $\boldsymbol{X}$.\footnote{A pilot symbol is assumed to be known at the receiver, and can be used to facilitate channel estimation.} We start with the transmit side. Let $\boldsymbol{D}_0$ be the source data to be transmitted, which is typically a real-valued tensor whose dimensions depend on the specific data type (such as audio, images, and videos). 
A DJSCC encoder $f_{\boldsymbol{\gamma}}$, parameterized by $\boldsymbol{\gamma}$, maps source data $\boldsymbol{D}_0$ into the transmitted signal matrix as
\begin{align}\label{encoder}
    \boldsymbol{X} = f_{\boldsymbol{\gamma}}(\boldsymbol{D}_0)\in\mathbb{C}^{N_tK\times T},
\end{align}
where the encoder output is arranged into a matrix with dimension $N_tK \times T$ to conform with the MIMO configuration in \eqref{systemmodel1}. At the receive side, a diffusion-based blind receiver $g_{\boldsymbol{\varTheta}}$, parameterized by $\boldsymbol{\varTheta}$, directly recovers both the channel $\boldsymbol{H}_0$ and the source data $\boldsymbol{D}_0$ from the received signal $\boldsymbol{Y}$ without pilot assistance, i.e.,
\begin{align}\label{receiver}
    \{\hat{\boldsymbol{H}}_0, \hat{\boldsymbol{D}}_0\} = g_{\boldsymbol{\varTheta}}\left(\boldsymbol{Y}\right),
\end{align}
where $\hat{\boldsymbol{H}}_0$ and $\hat{\boldsymbol{D}}_0$ are the recovered channel and source data, respectively.





\subsection{Problem Formulation}
Based on \eqref{systemmodel1}-\eqref{encoder}, the joint distribution of the received signal, channel, transmitted signal, and source data can be written as
\begin{align}\label{jointcompute}
    &p\left(\boldsymbol{Y}, \boldsymbol{H}_0,\boldsymbol{X}, \boldsymbol{D}_0\right) \nonumber\\={}& p\left(\boldsymbol{Y}\vert \boldsymbol{H}_0,\boldsymbol{X}\right)p\left(\boldsymbol{H}_0\right)p_{\boldsymbol{\gamma}}\left(\boldsymbol{X}\vert \boldsymbol{D}_0\right)p\left(\boldsymbol{D}_0\right),
\end{align}
where $p(\boldsymbol{H}_0)$ and $p\left(\boldsymbol{D}_0\right)$ represent the \textit{a priori} distributions of the channel and the source data, respectively. The conditional probability density function (pdf) of $\boldsymbol{Y}$ given $\boldsymbol{H}_0$ and $\boldsymbol{X}$ is $p\left(\boldsymbol{Y}\vert \boldsymbol{H}_0,\boldsymbol{X}\right) = \mathcal{CN}\left(\boldsymbol{Y}; \boldsymbol{H}_0\boldsymbol{X}, \sigma^2_n\right)$ according to the channel model in \eqref{systemmodel1}. Note that the encoding process in \eqref{encoder} is deterministic, the conditional pdf of $\boldsymbol{X}$ given $\boldsymbol{D}_0$ is $p_{\boldsymbol{\gamma}}\left(\boldsymbol{X}\vert\boldsymbol{D}_0\right) = \delta\left(\boldsymbol{X}- f_{\boldsymbol{\gamma}}\left(\boldsymbol{D}_0\right)\right)$ with $\delta\left(\cdot\right)$ being the Dirac-delta distribution. 
With the joint distribution given in \eqref{jointcompute}, the probabilistic model of Blind-MIMOSC is given by
\begin{equation}
\begin{tikzpicture}[baseline=(Y.base), thick, >={Stealth[scale=0.8]}]
    \node (D0) at (0,0) {$\boldsymbol{D}_0$};
    \node (X0) at (2,0) {$\boldsymbol{X}$};
    \node (H0) at (2,1) {$\boldsymbol{H}_0$};
    \node (Y)  at (3,0.5) {$\boldsymbol{Y}$};
    \node (HhatDhat)  at (5.5,0.5) {$\{\hat{\boldsymbol{H}}_0, \hat{\boldsymbol{D}}_0\},$};
    
    \draw[->] (D0) -- node[above] {$f_{\boldsymbol{\gamma}}(\cdot)$} (X0);
    \draw[->] (X0) -- (Y);
    \draw[->] (H0) -- (Y);
    \draw[->, dashed] (Y) -- node[above] {$g_{\boldsymbol{\varTheta}}\left(\cdot\right)$} (HhatDhat);
\end{tikzpicture}
\end{equation}
where the dashed arrow from $\boldsymbol{Y}$ to the estimates $\hat{\boldsymbol{H}}_0$ and $\hat{\boldsymbol{D}}_0$ represents the inference
process of the blind receiver. 

Our Blind-MIMOSC follows the conditional entropy minimization \cite[Ch. 2.2]{cover2012elements}, which aims to minimize the uncertainty of both the channel $\boldsymbol{H}_0$ and the source data $\boldsymbol{D}_0$ given the received signal $\boldsymbol{Y}$, i.e.,
\begin{align}\label{entropy1}
    &H\left(\boldsymbol{H}_0,\boldsymbol{D}_0\vert  \boldsymbol{Y}\right)
    = \mathbb{E}_{p\left(\boldsymbol{Y}, \boldsymbol{H}_0,\boldsymbol{D}_0\right)}\left[-\ln p_{\boldsymbol{\gamma}}\left(\boldsymbol{H}_0,\boldsymbol{D}_0\vert  \boldsymbol{Y}\right)\right].
\end{align}
It is noteworthy that minimizing the conditional entropy $H\left(\boldsymbol{H}_0,\boldsymbol{D}_0\vert\boldsymbol{Y}\right)$ is equivalent to maximizing the mutual information between $\left\{\boldsymbol{H}_0, \boldsymbol{D}_0\right\}$ and $\boldsymbol{Y}$, denoted by $I\left(\boldsymbol{H}_0,\boldsymbol{D}_0;\boldsymbol{Y}\right)$. This equivalence stems from the fact of $I\left(\boldsymbol{H}_0,\boldsymbol{D}_0;\boldsymbol{Y}\right) = H\left(\boldsymbol{H}_0,\boldsymbol{D}_0\right) - H\left(\boldsymbol{H}_0,\boldsymbol{D}_0\vert\boldsymbol{Y}\right)$ \cite[Ch. 2.4]{cover2012elements}, where $H\left(\boldsymbol{H}_0, \boldsymbol{D}_0\right)$, the joint entropy of the channel and the source data, remains constant for a given system. Thus, minimizing the conditional entropy $H\left(\boldsymbol{H}_0, \boldsymbol{D}_0 \vert \boldsymbol{Y}\right)$ is equivalent to maximizing the mutual information $I\left(\boldsymbol{H}_0,\boldsymbol{D}_0;\boldsymbol{Y}\right)$.

The minimization of $H\left(\boldsymbol{H}_0,\boldsymbol{D}_0 \mid \boldsymbol{Y} \right)$ actually serves as a general objective for both blind and pilot-based approaches. In the pilot-based case, this conditional entropy can be upper-bounded by the sum of two loss terms, namely, two cross-entropy loss functions for training the channel estimator and the JSCC codec, respectively. Detailed discussions are provided in Appendix~\ref{pilotbased}.
As for the design of blind approaches, existing blind receiver designs typically follow a two-stage scheme comprising: 1) blind channel estimation and signal detection via a blind detector $g_{\boldsymbol{\theta}_1}$ parameterized by $\boldsymbol{\theta}_1$, and 2) subsequent JSCC decoding via a decoder $g_{\boldsymbol{\theta}_2}$ parameterized by $\boldsymbol{\theta}_2$. To see this, the conditional entropy is relaxed to the sum of two cross-entropy loss functions as
\begin{subequations}\label{BCAS}
\begin{align}
    &H\left(\boldsymbol{H}_0,\boldsymbol{D}_0\vert  \boldsymbol{Y}\right)\nonumber \\ 
    \leq{}& H\left(\boldsymbol{H}_0, \boldsymbol{X}, \boldsymbol{D}_0\vert\boldsymbol{Y}\right)\label{eq:step1}\\
    ={}& H\left(\boldsymbol{H}_0, \boldsymbol{X}\vert\boldsymbol{Y}\right) + H\left(\boldsymbol{D}_0\vert \boldsymbol{H}_0,\boldsymbol{X},\boldsymbol{Y}\right)\label{eq:step2}\\
    \leq{}& H\left(\boldsymbol{H}_0, \boldsymbol{X}\vert\boldsymbol{Y}\right) + H\left(\boldsymbol{D}_0\vert \boldsymbol{X}\right)\label{eq:step3}\\
    \leq{}& \underbrace{\mathbb{E}_{p\left(\boldsymbol{H}_0, \boldsymbol{X},\boldsymbol{Y}\right)}\left[-\ln q_{\boldsymbol{\theta}_1}\left(\boldsymbol{H}_0, \boldsymbol{X}\vert  \boldsymbol{Y}\right)\right]}_{\text{Loss function of the blind detector $g_{\boldsymbol{\theta}_1}$}}\nonumber\\
    &+\underbrace{\mathbb{E}_{p_{\boldsymbol{\gamma}}(\boldsymbol{D}_0,\boldsymbol{X})}[-\ln q_{\boldsymbol{\theta}_2}(\boldsymbol{D}_0\vert\boldsymbol{X})]}_{\text{Loss function of the JSCC codec $\{f_{\boldsymbol{\gamma}},g_{\boldsymbol{\theta}_2}\}$}}\label{eq:step7},
\end{align} 
\end{subequations}
where $q_{\boldsymbol{\theta}_1}\left(\boldsymbol{H}_0, \boldsymbol{X}\vert\boldsymbol{Y}\right)$ and $q_{\boldsymbol{\theta}_2}(\boldsymbol{D}_0\vert\boldsymbol{X})$ are the variational approximations of the true \textit{a posteriori} distributions $p\left(\boldsymbol{H}_0, \boldsymbol{X}\vert\boldsymbol{Y}\right)$ and $p_{\boldsymbol{\gamma}}(\boldsymbol{D}_0\vert\boldsymbol{X})$, respectively. To understand this derivation, consider the following explanations for each step. The inequality in \eqref{eq:step1} holds because adding the intermediate variable $\boldsymbol{X}$ (the transmitted signal) increases the joint conditional entropy. The equality in \eqref{eq:step2} follows from the chain rule of conditional entropy. The inequality in \eqref{eq:step3} follows from the property that dropping off condition variables $\boldsymbol{H}_0$ and $\boldsymbol{Y}$ increases the uncertainty of $\boldsymbol{D}_0$, i.e., $H\left(\boldsymbol{D}_0\vert \boldsymbol{X}\right) \geq H\left(\boldsymbol{D}_0\vert \boldsymbol{H}_0,\boldsymbol{X},\boldsymbol{Y}\right)$. 
Finally, the inequality in \eqref{eq:step7} introduces two cross-entropies that upper-bound the terms in \eqref{eq:step3}, which are loss functions for training 
the blind detector $g_{\boldsymbol{\theta}_1}$ and the JSCC codec $\{f_{\boldsymbol{\gamma}},g_{\boldsymbol{\theta}_2}\}$, respectively.
\begin{figure}[htb]
    \centering
    \subfigure[]{
        \centering
        \includegraphics[width=0.325\linewidth]{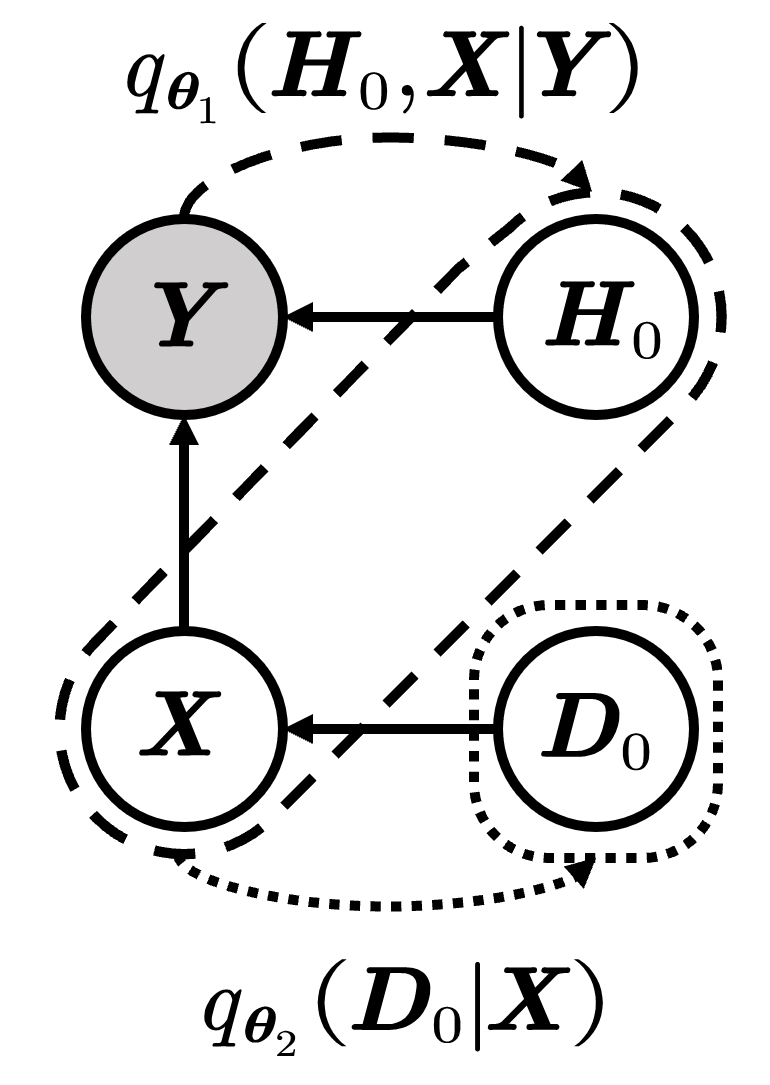}
    }
    \subfigure[]{
        \centering
        \includegraphics[width=0.325\linewidth]{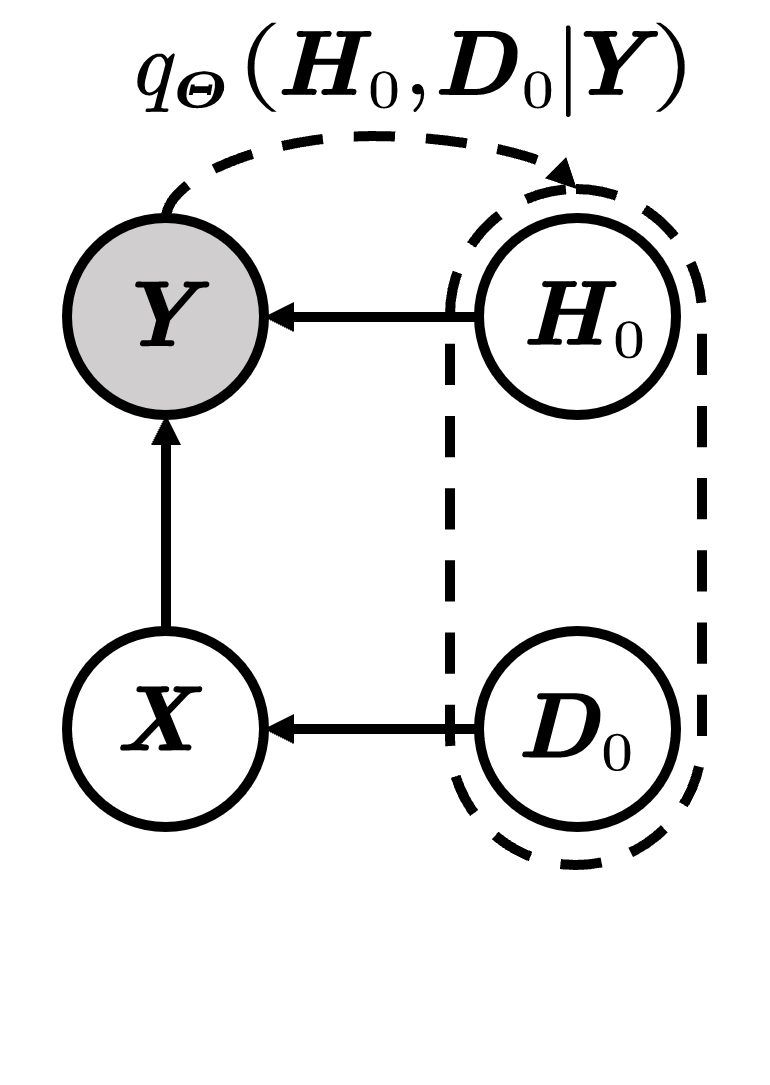}
    }
    \caption{Probabilistic graph of (a) blind channel estimation and signal detection followed by JSCC decoding; (b) blind channel-and-source recovery. Solid arrows denote the encoding and the MIMO transmission processes, and dashed arrows denote the variational inference process.
    }
    \label{inferenceDAG}
\end{figure}

From \eqref{BCAS}, the resulting receiver employs a two-stage inference scheme as shown in Fig. \ref{inferenceDAG} (a). Although these two-stage blind detection approaches reduce pilot overhead, they suffer from the following three drawbacks. 
First, since it is challenging to directly minimize \eqref{eq:step7}, existing blind detection approaches typically minimize the two loss functions in \eqref{eq:step7} separately. Due to the data processing inequality \cite{tse2005fundamentals} and the introduction of additional assumptions on the channel $\boldsymbol{H}_0$ and the signal $\boldsymbol{X}$ to facilitate blind detection, such as assuming that their elements are i.i.d. \cite{zhang2018blind,dean2018blind,ding2019sparsity,yuan2023blind,schmid2025blind,liu2025blind}, these two-stage blind detection approaches are suboptimal.
Second, in the first stage, estimating $\boldsymbol{H}_0$ and $\boldsymbol{X}$ via the bilinear factorization of the received signal $\boldsymbol{Y}$ suffers from non-unique factorization. 
Ding \emph{et al.} \cite{ding2019sparsity} introduced a sparsity constraint on the transmitted signal to ensure a unique sparse factorization up to permutation and phase ambiguities, as permuted and phase-shifted solutions are statistically unidentifiable under the aforementioned i.i.d. assumptions on $\boldsymbol{H}_0$ and $\boldsymbol{X}$.
However, from an information-theoretic perspective, sparse signaling is generally far from optimal in information maximization.
Third, to eliminate the remaining permutation and phase ambiguities, additional reference symbols known to the receiver must be inserted into $\boldsymbol{X}$ \cite{zhang2018blind,dean2018blind,ding2019sparsity,yuan2023blind,schmid2025blind,liu2025blind} for ambiguity calibration. These reference symbols prevent the blind detection schemes in \cite{zhang2018blind,dean2018blind,ding2019sparsity,yuan2023blind,schmid2025blind,liu2025blind} from being completely pilot-free.

To avoid the disadvantages of the above two-stage blind approach, we formulate a blind channel-and-source recovery problem to directly estimate the channel $\boldsymbol{H}_0$ and the source data $\boldsymbol{D}_0$ from the received signal $\boldsymbol{Y}$. 
The joint distribution of the received signal, channel, and source data in the conditional entropy \eqref{entropy1} is
\begin{align}
    p\left(\boldsymbol{Y}, \boldsymbol{H}_0, \boldsymbol{D}_0\right) ={}& \int p\left(\boldsymbol{Y}, \boldsymbol{H}_0, \boldsymbol{X}, \boldsymbol{D}_0\right) d\boldsymbol{X} \nonumber\\={}& p_{\boldsymbol{\gamma}}\left(\boldsymbol{Y}\vert \boldsymbol{H}_0, \boldsymbol{D}_0\right)p\left(\boldsymbol{H}_0\right)p\left(\boldsymbol{D}_0\right),
\end{align}
with $p_{\boldsymbol{\gamma}}\left(\boldsymbol{Y}\vert \boldsymbol{H}_0,\boldsymbol{D}_0\right) = \mathcal{CN}\left(\boldsymbol{Y}; \boldsymbol{H}_0f_{\boldsymbol{\gamma}}\left(\boldsymbol{D}_0\right), \sigma^2_n\right)$,
and the \textit{a posteriori} distribution is 
\begin{align}\label{postcompute0}
    p_{\boldsymbol{\gamma}}\left(\boldsymbol{H}_0,\boldsymbol{D}_0\vert\boldsymbol{Y}\right) \propto{}& p_{\boldsymbol{\gamma}}\left(\boldsymbol{Y}\vert \boldsymbol{H}_0, \boldsymbol{D}_0\right)p\left(\boldsymbol{H}_0\right)p\left(\boldsymbol{D}_0\right).
\end{align}
Note that \eqref{postcompute0} is generally intractable due to the unknown \textit{a priori} distributions of the channel and the source data. To avoid this, a widely used approach is to replace the \textit{a posteriori} distribution $p_{\boldsymbol{\gamma}}\left(\boldsymbol{H}_0,\boldsymbol{D}_0\vert\boldsymbol{Y}\right)$ with a variational distribution $q_{\boldsymbol{\varTheta}}\left(\boldsymbol{H}_0,\boldsymbol{D}_0\vert\boldsymbol{Y}\right)$ parameterized by $\boldsymbol{\varTheta}$ \cite{cai2025end}. 
This yields the cross-entropy that upper-bounds the conditional entropy as
\begin{align}
    H\left(\boldsymbol{H}_0,\boldsymbol{D}_0\vert  \boldsymbol{Y}\right)
    \leq\mathbb{E}_{p_{\boldsymbol{\gamma}}\left(\boldsymbol{Y},\boldsymbol{H}_0,\boldsymbol{D}_0\right)}\left[-\ln q_{\boldsymbol{\varTheta}}\left(\boldsymbol{H}_0,\boldsymbol{D}_0\vert \boldsymbol{Y}\right)\right] 
    \label{LVUB}.
\end{align}   
Therefore, the conditional entropy minimization is relaxed to the minimization of the cross-entropy:
\begin{subequations}\label{optimizeP1}
\begin{align}
    \min_{\boldsymbol{\gamma}, \boldsymbol{\varTheta}}& \quad\mathbb{E}_{p_{\boldsymbol{\gamma}}\left(\boldsymbol{Y},\boldsymbol{H}_0,\boldsymbol{D}_0\right)}\left[-\ln q_{\boldsymbol{\varTheta}}\left(\boldsymbol{H}_0,\boldsymbol{D}_0\vert \boldsymbol{Y}\right)\right],\\
    \text{s. t.}& \quad \frac{1}{N_tKT}\left\Vert  \boldsymbol{X}\right\Vert^2_F \leq P,\label{constraintX}
\end{align}
\end{subequations}
where the average power constraint $P$ is imposed on the transmitted signal. By solving the optimization problem in \eqref{optimizeP1}, our proposed Blind-MIMOSC framework is realized with the encoder $f_{\boldsymbol{\gamma}}$ and the blind receiver $g_{\boldsymbol{\varTheta}}$ which simultaneously recovers the channel and the source data without using pilots.


Solving \eqref{optimizeP1} poses the following serious challenges. First, the bilinear channel model in \eqref{systemmodel1}
complicates the problem as it may lead to a non-unique bilinear factorization, which hinders the blind receiver $g_{\boldsymbol{\varTheta}}$ from learning an effective mapping from the received signal $\boldsymbol{Y}$ directly to both the channel $\boldsymbol{H}_0$ and the source data $\boldsymbol{D}_0$.
Second, unlike the existing blind approaches that exploit the signal sparsity \cite{ding2019sparsity}, our blind channel-and-source recovery scheme can only rely on the structural characteristics of the source data, such as the texture and contours of an image \cite{dong2015image} and temporal correlations between frames in a video \cite{xue2021denoising}. However, these structures usually do not allow for explicit mathematical expressions, necessitating implicit learning through DNNs. 
In fact, there is so far no successful implementation of a completely pilot-free MIMO semantic communication system that can achieve reliable joint channel-and-source recovery prior to the work in this paper.

\subsection{Proposed Transceiver Design}
To efficiently solve \eqref{optimizeP1}, we take the following approach to sequentially train the encoder and the blind receiver. 

\subsubsection{Transmitter Design}

For the design of the DJSCC encoder in Blind-MIMOSC, we adopt established DJSCC codec training methodologies \cite{wu2024deep, yang2024swinjscc} to train a codec pair, denoted as $\{f_{\boldsymbol{\gamma}}, g_{\boldsymbol{\beta}}\}$, where $g_{\boldsymbol{\beta}}$ is the corresponding decoder of the encoder $f_{\boldsymbol{\gamma}}$. Since the channel is unknown to the transceiver, this training is conducted without incorporating channel information by solving the following optimization problem:
\begin{subequations}
\begin{align}
    \min_{\boldsymbol{\gamma}, \boldsymbol{\beta}}& \quad \left\Vert \boldsymbol{D}_0 - g_{\boldsymbol{\beta}}(f_{\boldsymbol{\gamma}}(\boldsymbol{D}_0)) \right\Vert^2_F,\\ \text{s. t.}& \quad \frac{1}{N_t K T} \left\Vert f_{\boldsymbol{\gamma}}(\boldsymbol{D}_0) \right\Vert^2_F \leq P,
\end{align}
\end{subequations}
where 
$P$ is the power constraint in \eqref{optimizeP1}. 
Notably, the decoder $g_{\boldsymbol{\beta}}$ is not utilized in Blind-MIMOSC. Given that the training of DJSCC codecs has been thoroughly explored in existing literature, we focus on the blind receiver design as follows.

\subsubsection{Blind Receiver Design}
For a given encoder $f_{\boldsymbol{\gamma}}$,
the optimization problem in \eqref{optimizeP1} is now equivalent to 
\begin{align}\label{optimizeP4}
    \min_{q_{\boldsymbol{\varTheta}}} \quad
    \mathbb{E}_{p\left(\boldsymbol{Y},\boldsymbol{H}_0,\boldsymbol{D}_0\right)}\left[-\ln q_{\boldsymbol{\varTheta}}\left(\boldsymbol{H}_0,\boldsymbol{D}_0\vert \boldsymbol{Y}\right)\right].
\end{align}  
The minimization can be achieved when $q_{\boldsymbol{\varTheta}}\left(\boldsymbol{H}_0,\boldsymbol{D}_0\vert \boldsymbol{Y}\right)=p_{\boldsymbol{\gamma}}\left(\boldsymbol{H}_0,\boldsymbol{D}_0\vert \boldsymbol{Y}\right)$.
Since the explicit \textit{a priori} distributions of the channel and the source data in \eqref{postcompute0} are unavailable, we approximate $p\left(\boldsymbol{H}_0\right)$ and $p\left(\boldsymbol{D}_0\right)$ by training unconditional score-based diffusion models as the generative priors, denoted by $q_{\boldsymbol{\theta}_H}\left(\boldsymbol{H}_0\right)$ and $q_{\boldsymbol{\theta}_D}\left(\boldsymbol{D}_0\right)$, respectively \cite{song2021scorebased}. Details of the score networks are described later in Section~\ref{Section3}. Consequently, the variational distribution is expressed as
\begin{align}
    q_{\boldsymbol{\varTheta}}\left(\boldsymbol{H}_0,\boldsymbol{D}_0\vert \boldsymbol{Y}\right) 
    \propto 
    p_{\boldsymbol{\gamma}}\left(\boldsymbol{Y}\vert \boldsymbol{H}_0,\boldsymbol{D}_0\right)q_{\boldsymbol{\theta}_H}\left(\boldsymbol{H}_0\right)q_{\boldsymbol{\theta}_D}\left(\boldsymbol{D}_0\right),
\end{align}
with $\boldsymbol{\varTheta} = \{\boldsymbol{\theta}_H, \boldsymbol{\theta}_D\}$.
The inference process of the proposed blind receiver is depicted in Fig. \ref{inferenceDAG} (b).
Notably, the proposed blind receiver architecture utilizes pre-trained components, i.e., the encoder $f_{\boldsymbol{\gamma}}$ and the unconditional score-based diffusion models, eliminating the need for resource-intensive joint retraining when source datasets or channel conditions change. 

\begin{remark}
To achieve unique bilinear factorization of the received signal to obtain the channel and the transmitted signal, traditional blind detection methods \cite{zhang2018blind,dean2018blind,ding2019sparsity,yuan2023blind,schmid2025blind,liu2025blind} typically rely on the use of reference symbols to resolve the permutation and phase ambiguities, as permuted and phase-shifted solutions are statistically unidentifiable under their i.i.d. assumptions on the elements of $\boldsymbol{H}_0$ and $\boldsymbol{X}$. In contrast, Blind-MIMOSC employs score-based diffusion models to characterize the complex prior information of $\boldsymbol{H}_0$ and $\boldsymbol{D}_0$, eliminating the need for additional i.i.d. assumptions and thereby avoiding the associated ambiguity issues. Consequently, the proposed Blind-MIMOSC can achieve joint recovery of the channel and signal source without requiring any pilot signals.
\end{remark}




\section{Blind Receiver Design via Parallel Variational Diffusion}\label{Section3}

We now consider the blind receiver design problem in \eqref{optimizeP4}. 
Directly approximating $p_{\boldsymbol{\gamma}}\left(\boldsymbol{H}_0,\boldsymbol{D}_0\vert \boldsymbol{Y}\right)$ is challenging due to the bilinear constraint in \eqref{systemmodel1} as well as the nonlinear encoder in \eqref{encoder}. To tackle this, we propose the score-based parallel variational diffusion (PVD) model. 
Specifically, we introduce latent variables of the score-based diffusion models, and upper-bound the objective in \eqref{optimizeP4} with the cross-entropy between the joint \textit{a posteriori} distribution (containing all latent variables) and its variational approximation. 
The parameters $\boldsymbol{\varTheta}$ are learned from the forward process of PVD.
For the inference process, the minimization of the upper bound is partitioned into tractable subproblems for each reverse diffusion step in PVD, where the variational means of both the channel and the source are gradually refined across reverse diffusion steps via gradient descent. Detailed descriptions of the model are provided in the following subsections.
\vspace{-0.1cm}
\begin{figure}[H]
    \centering
    \includegraphics[width=0.999\linewidth]{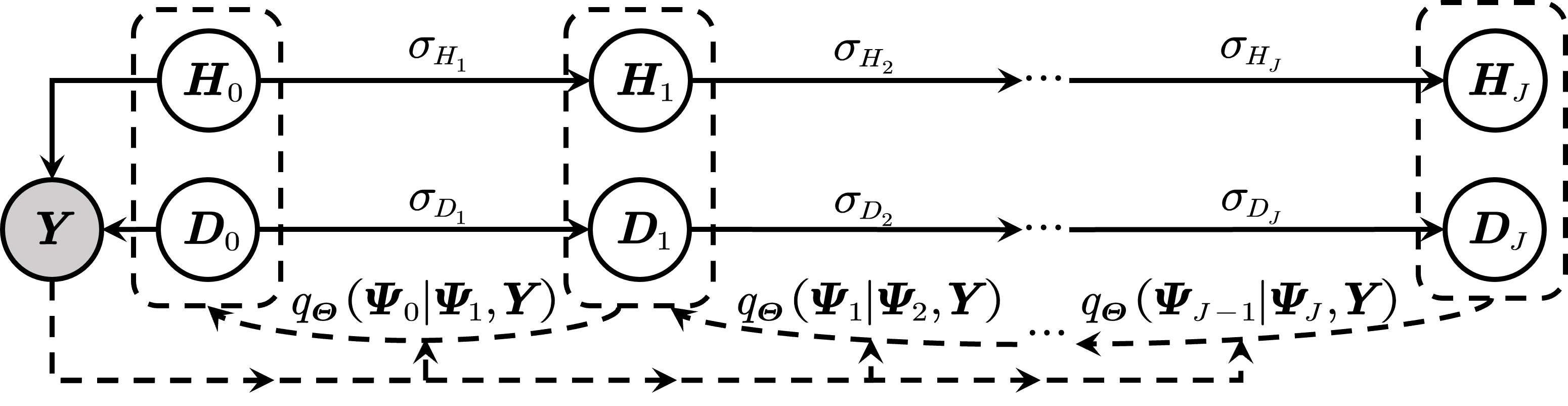}
    \vspace{-0.3cm}
    \caption{Probabilistic graph of the proposed blind receiver based on the parallel variational diffusion model. Solid arrows 
    denote the forward diffusion process of PVD. Dashed arrows denote the reverse diffusion process of PVD.}
    \label{BlindReceiver}
\end{figure}
\subsection{Problem Transformation}
We first introduce the forward diffusion process of PVD, where Gaussian noise is gradually added to the original data until the data are overwhelmed by the Gaussian noise. More specifically, the forward diffusion processes of the channel and the source are two independent Markov processes with the joint probability of their latent variables $\{\boldsymbol{H}_j\}_{j=0}^{J}$ and $\left\{\boldsymbol{D}_j\right\}_{j=0}^{J}$ given by
\begin{subequations}\label{diffpro}
\begin{align}
    p\left(\{\boldsymbol{H}_j\}_{j=0}^{J}\right) &= p\left(\boldsymbol{H}_{0}\right)\prod\nolimits_{j=0}^{J-1}p\left(\boldsymbol{H}_{j+1}\vert \boldsymbol{H}_{j}\right),\\
    p\left(\{\boldsymbol{D}_j\}_{j=0}^{J}\right) &= p\left(\boldsymbol{D}_0\right)\prod\nolimits_{j=0}^{J-1}p\left(\boldsymbol{D}_{j+1}\vert \boldsymbol{D}_j\right),
\end{align}
\end{subequations}
which are represented by the probabilistic graph in Fig. \ref{BlindReceiver}. The latent variables are given by
\begin{subequations}
\begin{align}
    \boldsymbol{H}_{j} &= \boldsymbol{H}_{0} + \sigma^{2}_{H_{j}}\boldsymbol{N}_{H_j}, \forall j\in\{0, 1, \cdots, J\}\\
    \boldsymbol{D}_j &= \boldsymbol{D}_0 + \sigma^{2}_{D_{j}}\boldsymbol{N}_{D_j}, \forall j\in\{0, 1, \cdots, J\}
\end{align}
\end{subequations}
where $\sigma^{2}_{H_{J}} > \cdots> \sigma^{2}_{H_{1}} > \sigma^{2}_{H_{0}} = 0$ and $\sigma^{2}_{D_{J}} > \cdots> \sigma^{2}_{D_{1}} > \sigma^{2}_{D_{0}} = 0$ are the pre-defined noise variances, $\boldsymbol{N}_{H_j}$ and $\boldsymbol{N}_{D_j}$ are AWGN matrices with elements drawn from $\mathcal{CN}(0, 1)$ and $\mathcal{N}(0, 1)$, respectively.
We thus have the conditional distributions:
\begin{subequations}\label{conditionaldis}
\begin{align}
    p\left(\boldsymbol{H}_{j+1}\vert \boldsymbol{H}_{j}\right) &=  \mathcal{CN}(\boldsymbol{H}_{j+1}; \boldsymbol{H}_{j}, \sigma^{2}_{H_{j+1}} - \sigma^{2}_{H_{j}}),\\
    p\left(\boldsymbol{D}_{j+1}\vert \boldsymbol{D}_j\right) &=  \mathcal{N}(\boldsymbol{D}_{j+1}; \boldsymbol{D}_j, \sigma^{2}_{D_{j+1}} - \sigma^{2}_{D_{j}}).
\end{align}
\end{subequations}
The parameters $\boldsymbol{\varTheta}$ of the pre-trained unconditional score networks are learned from the forward diffusion process via denoising score matching \cite{meng2021estimating} by solving
\begin{align}
    \min_{\boldsymbol{\theta}_H}&\quad\mathbb{E}_{p(\boldsymbol{H}_j)}[\|\boldsymbol{S}_{\boldsymbol{\theta}_H}(\boldsymbol{H}_j, \sigma_{H_j})-\nabla_{\boldsymbol{H}_j}\ln p(\boldsymbol{H}_j)\|^2_F \nonumber\\&\quad+ \|s_{\boldsymbol{\theta}_H}(\boldsymbol{H}_j, \sigma_{H_j})-\operatorname{tr}(\nabla^2_{\boldsymbol{H}_j}\ln p(\boldsymbol{H}_j))\|^2_F],\\
    \min_{\boldsymbol{\theta}_D}&\quad\mathbb{E}_{p(\boldsymbol{D}_j)}[\|\boldsymbol{S}_{\boldsymbol{\theta}_D}(\boldsymbol{D}_j, \sigma_{D_j})-\nabla_{\boldsymbol{D}_j}\ln p(\boldsymbol{D}_j)\|^2_F\nonumber\\&\quad+ \|s_{\boldsymbol{\theta}_D}(\boldsymbol{D}_j, \sigma_{D_j})-\operatorname{tr}(\nabla^2_{\boldsymbol{D}_j}\ln p(\boldsymbol{D}_j))\|^2_F].
\end{align}
The resulting score networks are the score of the generative priors. 
Specifically, the first-order score networks are
\begin{subequations}\label{firstorderscore}
    \begin{align}
        \boldsymbol{S}_{\boldsymbol{\theta}_H}(\boldsymbol{H}_j, \sigma_{H_j})&=\nabla_{\boldsymbol{H}_j}\ln q_{\boldsymbol{\theta}_H}\left(\boldsymbol{H}_j\right),\\
        \boldsymbol{S}_{\boldsymbol{\theta}_D}(\boldsymbol{D}_j, \sigma_{D_j})&=\nabla_{\boldsymbol{D}_j}\ln q_{\boldsymbol{\theta}_D}\left(\boldsymbol{D}_j\right),
    \end{align}
\end{subequations}
and the second-order trace score networks \cite{meng2021estimating} are
\begin{subequations}\label{secondorderscore}
    \begin{align}
        s_{\boldsymbol{\theta}_H}(\boldsymbol{H}_j, \sigma_{H_j})&=\operatorname{tr}(\nabla^2_{\boldsymbol{H}_j}\ln q_{\boldsymbol{\theta}_H}\left(\boldsymbol{H}_j\right)),\\
        s_{\boldsymbol{\theta}_D}(\boldsymbol{D}_j, \sigma_{D_j})&=\operatorname{tr}(\nabla^2_{\boldsymbol{D}_j}\ln q_{\boldsymbol{\theta}_D}\left(\boldsymbol{D}_j\right)).
    \end{align}
\end{subequations}
Considering the latent variables $ \{\boldsymbol{H}_j\}_{j=0}^J$ and $\{\boldsymbol{D}_j\}_{j=0}^J$ in the forward diffusion process, the joint \textit{a posteriori} distribution can be rewritten as
\begin{align}
    &p(\left\{\boldsymbol{\varPsi}_j\right\}^J_{j=0}\vert \boldsymbol{Y})=
    p(\boldsymbol{\varPsi}_J\vert \boldsymbol{Y})\prod\nolimits_{j=0}^{J-1} p(\boldsymbol{\varPsi}_j\vert \boldsymbol{\varPsi}_{j+1}, \boldsymbol{Y}),
\end{align}
where $\boldsymbol{\varPsi}_j \triangleq \left\{\boldsymbol{H}_j, \boldsymbol{D}_j\right\}$ for ease of notation. The corresponding variational distribution is
\begin{align}
    &q_{\boldsymbol{\varTheta}}(\left\{\boldsymbol{\varPsi}_j\right\}^J_{j=0}\vert \boldsymbol{Y})=
    q_{\boldsymbol{\varTheta}}\left(\boldsymbol{\varPsi}_J\vert \boldsymbol{Y}\right)\prod\nolimits_{j=0}^{J-1} q_{\boldsymbol{\varTheta}}\left(\boldsymbol{\varPsi}_j\vert \boldsymbol{\varPsi}_{j+1}, \boldsymbol{Y}\right).
\end{align}
Given the forward diffusion process outlined above, we now introduce the following propositions.

\begin{proposition}\label{proposition_transform}
$\mathbb{E}_{p\left(\boldsymbol{Y},\boldsymbol{H}_0,\boldsymbol{D}_0\right)}\left[-\ln q_{\boldsymbol{\varTheta}}\left(\boldsymbol{H}_0,\boldsymbol{D}_0\vert \boldsymbol{Y}\right)\right]$ in \eqref{optimizeP4} is upper-bounded by 
\begin{align}\label{KLdivwithlatent}
    \mathbb{E}_{p(\boldsymbol{Y},\{\boldsymbol{\varPsi}_j\}^J_{j=0})}[-\ln q_{\boldsymbol{\varTheta}}(\{\boldsymbol{\varPsi}_j\}^J_{j=0}\vert \boldsymbol{Y})].
\end{align}
\end{proposition}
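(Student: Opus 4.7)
The plan is to establish this bound as a Jensen's inequality (ELBO-style) relation, extended to accommodate the parallel latent chains $\{\boldsymbol{H}_j\}_{j=0}^{J}$ and $\{\boldsymbol{D}_j\}_{j=0}^{J}$ bundled as $\boldsymbol{\varPsi}_j=\{\boldsymbol{H}_j,\boldsymbol{D}_j\}$. The key identity driving everything is that the marginal variational density $q_{\boldsymbol{\varTheta}}(\boldsymbol{\varPsi}_0\vert\boldsymbol{Y})$ equals the integral of the joint $q_{\boldsymbol{\varTheta}}(\{\boldsymbol{\varPsi}_j\}_{j=0}^{J}\vert\boldsymbol{Y})$ over the auxiliary latents $\{\boldsymbol{\varPsi}_j\}_{j=1}^{J}$, so the joint cross‑entropy on the right‑hand side is a natural upper bound for the marginal cross‑entropy on the left.

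First, I would introduce the forward-diffusion conditional $p(\{\boldsymbol{\varPsi}_j\}_{j=1}^{J}\vert\boldsymbol{\varPsi}_0)$ defined by \eqref{diffpro}--\eqref{conditionaldis} as an importance-style proposal, writing
\begin{align*}
q_{\boldsymbol{\varTheta}}(\boldsymbol{\varPsi}_0\vert\boldsymbol{Y})
= \int p(\{\boldsymbol{\varPsi}_j\}_{j=1}^{J}\vert\boldsymbol{\varPsi}_0)\,
\frac{q_{\boldsymbol{\varTheta}}(\{\boldsymbol{\varPsi}_j\}_{j=0}^{J}\vert\boldsymbol{Y})}{p(\{\boldsymbol{\varPsi}_j\}_{j=1}^{J}\vert\boldsymbol{\varPsi}_0)}\,
d\{\boldsymbol{\varPsi}_j\}_{j=1}^{J},
\end{align*}
which is valid because the forward process is constructed independently of $\boldsymbol{Y}$ given $\boldsymbol{\varPsi}_0$, as read off from the probabilistic graph in Fig.~\ref{BlindReceiver}. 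Applying Jensen's inequality to the convex function $-\ln(\cdot)$ then yields
\begin{align*}
-\ln q_{\boldsymbol{\varTheta}}(\boldsymbol{\varPsi}_0\vert\boldsymbol{Y})
\leq \mathbb{E}_{p(\{\boldsymbol{\varPsi}_j\}_{j=1}^{J}\vert\boldsymbol{\varPsi}_0)}\bigl[-\ln q_{\boldsymbol{\varTheta}}(\{\boldsymbol{\varPsi}_j\}_{j=0}^{J}\vert\boldsymbol{Y})\bigr]
- H\bigl(\{\boldsymbol{\varPsi}_j\}_{j=1}^{J}\,\big|\,\boldsymbol{\varPsi}_0\bigr).
\end{align*}
Taking the outer expectation with respect to $p(\boldsymbol{Y},\boldsymbol{\varPsi}_0)$ and using the Markov independence $\boldsymbol{Y}\perp\{\boldsymbol{\varPsi}_j\}_{j=1}^{J}\mid\boldsymbol{\varPsi}_0$ to collapse $p(\boldsymbol{Y},\boldsymbol{\varPsi}_0)\,p(\{\boldsymbol{\varPsi}_j\}_{j=1}^{J}\vert\boldsymbol{\varPsi}_0)=p(\boldsymbol{Y},\{\boldsymbol{\varPsi}_j\}_{j=0}^{J})$ reproduces exactly the right‑hand side of \eqref{KLdivwithlatent}.

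The principal subtlety I expect to handle is the residual additive term $H(\{\boldsymbol{\varPsi}_j\}_{j=1}^{J}\vert\boldsymbol{\varPsi}_0)$: it is fixed entirely by the prescribed variance schedule in \eqref{conditionaldis} and is independent of $\boldsymbol{\varTheta}$. Consistent with the standard reading of ELBO bounds in the diffusion‑model literature, I would interpret the stated inequality as holding up to this $\boldsymbol{\varTheta}$‑independent constant, which is harmless because the minimization in \eqref{optimizeP4} is only affected by the $\boldsymbol{\varTheta}$‑dependent part; equivalently, one can absorb the constant into the bound and retain only the $\boldsymbol{\varTheta}$‑dependent cross‑entropy term, which is what the downstream per‑step decomposition in Section~\ref{Section3} actually uses.
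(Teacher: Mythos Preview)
Your Jensen/ELBO route is valid, but it differs from the paper's argument. The paper does not use Jensen's inequality or an importance-weighting identity at all: it simply applies the chain rule
\[
q_{\boldsymbol{\varTheta}}(\{\boldsymbol{\varPsi}_j\}_{j=0}^{J}\mid\boldsymbol{Y})
= q_{\boldsymbol{\varTheta}}(\{\boldsymbol{\varPsi}_j\}_{j=1}^{J}\mid\boldsymbol{\varPsi}_0,\boldsymbol{Y})\,q_{\boldsymbol{\varTheta}}(\boldsymbol{\varPsi}_0\mid\boldsymbol{Y}),
\]
takes $-\ln$ and the expectation over $p(\boldsymbol{Y},\{\boldsymbol{\varPsi}_j\}_{j=0}^{J})$, obtaining an exact additive decomposition into the target term plus the cross-entropy $\mathbb{E}\bigl[-\ln q_{\boldsymbol{\varTheta}}(\{\boldsymbol{\varPsi}_j\}_{j=1}^{J}\mid\boldsymbol{\varPsi}_0,\boldsymbol{Y})\bigr]$, and then drops the latter by asserting it is non-negative. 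The two arguments are closely related: the paper's dropped cross-entropy equals your residual $H(\{\boldsymbol{\varPsi}_j\}_{j=1}^{J}\mid\boldsymbol{\varPsi}_0)$ plus a non-negative KL divergence, so the paper absorbs in one step both the constant you isolate and the Jensen gap. What your approach buys is explicitness about the $\boldsymbol{\varTheta}$-independent constant and a derivation that parallels the standard diffusion-ELBO machinery; what the paper's approach buys is brevity and the inequality in the exact form stated, without a leftover constant, though it tacitly relies on non-negativity of a differential cross-entropy (which is not automatic for continuous variables but is unproblematic for the Gaussian forward kernels used here).
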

\begin{proof}
The cross-entropy \eqref{KLdivwithlatent} can be written as
    \begin{subequations}
    \begin{align}
        &\mathbb{E}_{p(\boldsymbol{Y}, \{\boldsymbol{\varPsi}_j\}^J_{j=0})}[-\ln q_{\boldsymbol{\varTheta}}(\{\boldsymbol{\varPsi}_j\}^J_{j=0}\vert \boldsymbol{Y})]\nonumber\\
        ={}& \mathbb{E}_{p(\boldsymbol{Y},\{\boldsymbol{\varPsi}_j\}^J_{j=0})}[-\ln q_{\boldsymbol{\varTheta}}(\{\boldsymbol{\varPsi}_j\}^J_{j=1}\vert \boldsymbol{\varPsi}_0, \boldsymbol{Y})]\nonumber\\
        &+\mathbb{E}_{p(\boldsymbol{Y},\{\boldsymbol{\varPsi}_j\}^J_{j=0})}[-\ln q_{\boldsymbol{\varTheta}}(\boldsymbol{\varPsi}_0\vert \boldsymbol{Y})]\nonumber\\
        ={}&\mathbb{E}_{p(\boldsymbol{Y},\boldsymbol{\varPsi}_0)}\mathbb{E}_{p(\boldsymbol{Y},\{\boldsymbol{\varPsi}_j\}^J_{j=1}\vert \boldsymbol{\varPsi}_0)}[-\ln q_{\boldsymbol{\varTheta}}(\{\boldsymbol{\varPsi}_j\}^J_{j=1}\vert \boldsymbol{\varPsi}_0, \boldsymbol{Y})]\nonumber\\
        &+
        \mathbb{E}_{p(\boldsymbol{Y},\boldsymbol{\varPsi}_0)}[-\ln q_{\boldsymbol{\varTheta}}(\boldsymbol{\varPsi}_0\vert \boldsymbol{Y})]\label{KLnonneg}\\
        \geq{}& \mathbb{E}_{p(\boldsymbol{Y},\boldsymbol{\varPsi}_0)}[-\ln q_{\boldsymbol{\varTheta}}(\boldsymbol{\varPsi}_0\vert \boldsymbol{Y})],\label{KLnonneg2}
    \end{align}        
    \end{subequations}
    where the inequality in \eqref{KLnonneg2} holds because the first term in \eqref{KLnonneg} is non-negative cross-entropy.
\end{proof}

\begin{proposition}
\label{proposition_KL}
The minimization of the cross-entropy in \eqref{KLdivwithlatent} can be achieved by minimizing
\begin{align}\label{KLdivwithlatent2}
    \mathbb{E}_{p\left(\boldsymbol{\varPsi}_j, \boldsymbol{\varPsi}_{j+1}\vert \boldsymbol{Y}\right)}[-\ln q_{\boldsymbol{\varTheta}}\left(\boldsymbol{\varPsi}_j\vert \boldsymbol{\varPsi}_{j+1}, \boldsymbol{Y}\right)]
\end{align}
from $j=J-1$ to $j=0$.
\end{proposition}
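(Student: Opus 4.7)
The plan is to decompose the joint cross-entropy in \eqref{KLdivwithlatent} along the Markov chain structure of the reverse diffusion, so that it splits into an (essentially constant) terminal term plus a sum of per-step cross-entropies of the form \eqref{KLdivwithlatent2}, each of which is then minimized separately in the order dictated by the reverse chain.

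First, I would substitute the reverse factorization $q_{\boldsymbol{\varTheta}}(\{\boldsymbol{\varPsi}_j\}^J_{j=0}\vert \boldsymbol{Y})=q_{\boldsymbol{\varTheta}}(\boldsymbol{\varPsi}_J\vert \boldsymbol{Y})\prod_{j=0}^{J-1}q_{\boldsymbol{\varTheta}}(\boldsymbol{\varPsi}_j\vert \boldsymbol{\varPsi}_{j+1},\boldsymbol{Y})$ stated just before the proposition into the negative log, obtaining
\begin{equation*}
-\ln q_{\boldsymbol{\varTheta}}(\{\boldsymbol{\varPsi}_j\}^J_{j=0}\vert \boldsymbol{Y}) = -\ln q_{\boldsymbol{\varTheta}}(\boldsymbol{\varPsi}_J\vert \boldsymbol{Y}) - \sum_{j=0}^{J-1}\ln q_{\boldsymbol{\varTheta}}(\boldsymbol{\varPsi}_j\vert \boldsymbol{\varPsi}_{j+1},\boldsymbol{Y}).
\end{equation*}
Taking the expectation under $p(\boldsymbol{Y},\{\boldsymbol{\varPsi}_j\}_{j=0}^J)$ and applying the tower property to marginalize out the latents that do not appear in each summand, the $j$-th term collapses to $\mathbb{E}_{p(\boldsymbol{Y})}\mathbb{E}_{p(\boldsymbol{\varPsi}_j,\boldsymbol{\varPsi}_{j+1}\vert \boldsymbol{Y})}[-\ln q_{\boldsymbol{\varTheta}}(\boldsymbol{\varPsi}_j\vert \boldsymbol{\varPsi}_{j+1},\boldsymbol{Y})]$, i.e., exactly the quantity \eqref{KLdivwithlatent2} further averaged over $\boldsymbol{Y}$.

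Next, I would show that the terminal contribution $\mathbb{E}_{p(\boldsymbol{Y},\boldsymbol{\varPsi}_J)}[-\ln q_{\boldsymbol{\varTheta}}(\boldsymbol{\varPsi}_J\vert \boldsymbol{Y})]$ is effectively a constant independent of the per-step conditionals: because $\sigma^2_{H_J}$ and $\sigma^2_{D_J}$ are chosen so large that $\boldsymbol{\varPsi}_J$ is overwhelmed by Gaussian noise and nearly independent of $\boldsymbol{Y}$, the top-level variational can be fixed as a standard Gaussian and drops out of the optimization. The joint objective in \eqref{KLdivwithlatent} is then, up to this constant, the sum of the per-step objectives \eqref{KLdivwithlatent2}. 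Since each summand depends on a distinct pair $(\boldsymbol{\varPsi}_j,\boldsymbol{\varPsi}_{j+1})$ and on the reverse conditional at that specific level only, the joint minimum is attained by minimizing the summands separately; the natural ordering $j=J-1\to 0$ arises because the reverse conditional at step $j$ is conditioned on $\boldsymbol{\varPsi}_{j+1}$, which must therefore be available (i.e., optimized first).

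The main obstacle I foresee is the fact that the score-network parameters $\boldsymbol{\varTheta}=\{\boldsymbol{\theta}_H,\boldsymbol{\theta}_D\}$ are \emph{shared} across all noise levels, so strict separability of the sum is not immediate. I would resolve this by appealing to the PVD inference setting already set up earlier in the section: the score networks are pre-trained (hence frozen), and what is actually refined at each reverse step are the variational means of $\{\boldsymbol{H}_j,\boldsymbol{D}_j\}$; these are \emph{level-specific} free variables, so the per-step minimizations are indeed independent and can be carried out sequentially from $j=J-1$ down to $j=0$, as claimed.
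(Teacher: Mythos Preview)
Your proposal is correct and follows essentially the same route as the paper: factor $-\ln q_{\boldsymbol{\varTheta}}(\{\boldsymbol{\varPsi}_j\}\vert\boldsymbol{Y})$ via the reverse-chain decomposition, apply the tower property to reduce each summand to an expectation over $p(\boldsymbol{\varPsi}_j,\boldsymbol{\varPsi}_{j+1}\vert\boldsymbol{Y})$, and then minimize term-by-term in the order $j=J-1\to 0$. Your treatment is in fact slightly more careful than the paper's, which silently drops the terminal term $\mathbb{E}[-\ln q_{\boldsymbol{\varTheta}}(\boldsymbol{\varPsi}_J\vert\boldsymbol{Y})]$ and does not discuss the shared-parameter issue you raise and resolve.
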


\begin{proof}
The cross-entropy \eqref{KLdivwithlatent} can be factorized as
\begin{align}
    &\mathbb{E}_{p(\{\boldsymbol{\varPsi}_j\}^J_{j=0}\vert \boldsymbol{Y})}[-\ln q_{\boldsymbol{\varTheta}}(\{\boldsymbol{\varPsi}_j\}^J_{j=0}\vert \boldsymbol{Y})] \nonumber\\
    ={}& \sum\nolimits_{j = J-1}^0 \mathbb{E}_{p(\boldsymbol{Y})}\mathbb{E}_{p(\{\boldsymbol{\varPsi}_{j'}\}^J_{j'=0}\vert \boldsymbol{Y})}[-\ln q_{\boldsymbol{\varTheta}}(\boldsymbol{\varPsi}_{j}\vert \boldsymbol{\varPsi}_{j+1}, \boldsymbol{Y})]\nonumber\\
    ={}& \sum_{j = J-1}^0 \mathbb{E}_{p(\boldsymbol{Y})}\mathbb{E}_{p(\boldsymbol{\varPsi}_{j}, \boldsymbol{\varPsi}_{j+1}\vert \boldsymbol{Y})}[-\ln q_{\boldsymbol{\varTheta}}(\boldsymbol{\varPsi}_{j}\vert \boldsymbol{\varPsi}_{j+1}, \boldsymbol{Y})].\label{aux1}
\end{align}
Therefore, we can minimize \eqref{KLdivwithlatent} by separately minimizing the cross-entropies in \eqref{aux1} at each reverse step. 
Since $\mathbb{E}_{p\left(\boldsymbol{\varPsi}_j, \boldsymbol{\varPsi}_{j+1}\vert \boldsymbol{Y}\right)}[-\ln q_{\boldsymbol{\varTheta}}\left(\boldsymbol{\varPsi}_j\vert \boldsymbol{\varPsi}_{j+1}, \boldsymbol{Y}\right)]$ in the $j$-th reverse step
is conditioned on the latent variables in the $(j+1)$-th reverse step, the minimization of \eqref{aux1} is from the $(J-1)$-th reverse step to the $0$-th reverse step.
\end{proof}
Applying propositions \ref{proposition_transform} and \ref{proposition_KL} to the objective in \eqref{optimizeP4}, we transform \eqref{optimizeP4} to equivalent subproblems as
\begin{align}\label{optimizeP5}
    \min_{q_{\boldsymbol{\varTheta}}} \quad & 
    \mathbb{E}_{p\left(\boldsymbol{\varPsi}_j, \boldsymbol{\varPsi}_{j+1}\vert \boldsymbol{Y}\right)}[-\ln q_{\boldsymbol{\varTheta}}\left(\boldsymbol{\varPsi}_j\vert \boldsymbol{\varPsi}_{j+1}, \boldsymbol{Y}\right)]
    ,\nonumber \\ & \quad\quad {j = J-1, \cdots, 0.} 
\end{align}
Solving the subproblems in \eqref{optimizeP5} at each reverse step from $j=J - 1$ to $j = 0$, we obtain 
$q_{\boldsymbol{\varTheta}}\left(\boldsymbol{\varPsi}_0\vert \boldsymbol{\varPsi}_1, \boldsymbol{Y}\right)$ at the $0$-th reverse step, whose mean values are the estimates of the channel and the source data,
i.e.,
\begin{align}
    \{\hat{\boldsymbol{H}}_0,\hat{\boldsymbol{D}}_0\} = g_{\boldsymbol{\varTheta}}(\boldsymbol{Y}) = \mathbb{E}_{q_{\boldsymbol{\varTheta}}\left(\boldsymbol{\varPsi}_0\vert \boldsymbol{\varPsi}_1, \boldsymbol{Y}\right)}[\boldsymbol{\varPsi}_0].
\end{align}




\subsection{Reverse Process of Parallel Variational Diffusion}\label{RPPVD}
In this subsection, we develop the reverse process of PVD to solve the problems in \eqref{optimizeP5}. 
Specifically, the inference model of the blind receiver based on PVD is shown in Fig. \ref{BlindReceiver}, where the dashed arrows represent the reverse process of PVD from step $J-1$ to step $0$.
Different from traditional diffusion models relying on heuristic stochastic sampling \cite{chung2023diffusion}, PVD directly optimizes the variational means of the latent variables in a parallel manner over the reverse steps via gradient descent.

We first model the variational distribution $q_{\boldsymbol{\varTheta}}\left(\boldsymbol{\varPsi}_j\vert \boldsymbol{\varPsi}_{j+1}, \boldsymbol{Y}\right)$ as a Gaussian distribution based on the Gaussianity of $p\left(\boldsymbol{\varPsi}_j\vert \boldsymbol{\varPsi}_{j+1}, \boldsymbol{Y}\right)$ 
given by \cite[Proposition 1]{xue2025score}. Specifically, we have
\begin{align}\label{sample1}
    &q_{\boldsymbol{\varTheta}}\left(\boldsymbol{\varPsi}_j\vert \boldsymbol{\varPsi}_{j+1}, \boldsymbol{Y}\right)
    =\mathcal{CN}(\boldsymbol{H}_j; \hat{\boldsymbol{H}}_j, \Lambda_{H_j}^{-1})
    \mathcal{N}(\boldsymbol{D}_j; \hat{\boldsymbol{D}}_j, \Lambda_{D_j}^{-1}).
\end{align}
where precisions $\Lambda_{H_j}$ and $\Lambda_{D_j}$ can be approximated as
\begin{subequations}\label{Lambdas}
    \begin{align}
        \Lambda_{H_j} &= \frac{\sigma^{2}_{H_{j+1}} + \sigma^{2}_{H_{0}}}{(\sigma^{2}_{H_{j}} + \sigma^{2}_{H_{0}})(\sigma^{2}_{H_{j+1}} - \sigma^{2}_{H_{j}})}, \\
        \Lambda_{D_j} &= \frac{\sigma^{2}_{D_{j+1}} + \sigma^{2}_{D_{0}}}{(\sigma^{2}_{D_{j}} + \sigma^{2}_{D_{0}})(\sigma^{2}_{D_{j+1}} - \sigma^{2}_{D_{j}})}, 
    \end{align}
\end{subequations}
according to \cite[Proposition 1]{xue2025score}. 
We next focus on updating the means $\hat{\boldsymbol{H}}_j$ and $\hat{\boldsymbol{D}}_j$ 
based on $\boldsymbol{H}_{j+1}$, $\boldsymbol{D}_{j+1}$, and $\boldsymbol{Y}$
by solving \eqref{optimizeP5} via gradient descent. Since the variational distribution \eqref{sample1} is Gaussian, following \cite[Eq. (A.2)]{opper2009variational}, the gradients of the objective in \eqref{optimizeP5} to $\hat{\boldsymbol{H}}_j$ and $\hat{\boldsymbol{D}}_j$ are
\begin{subequations}\label{mean}
\begin{align}
    &\nabla_{\hat{\boldsymbol{H}}_j} \mathbb{E}_{p\left(\boldsymbol{\varPsi}_j, \boldsymbol{\varPsi}_{j+1}\vert \boldsymbol{Y}\right)} \left[-\ln q_{\boldsymbol{\varTheta}}\left(\boldsymbol{\varPsi}_j\vert \boldsymbol{\varPsi}_{j+1}, \boldsymbol{Y}\right)\right]
    \nonumber\\ ={}& 
    \mathbb{E}_{p\left(\boldsymbol{\varPsi}_j, \boldsymbol{\varPsi}_{j+1}\vert \boldsymbol{Y}\right)} \left[\nabla_{\boldsymbol{H}_j} \ln q_{\boldsymbol{\varTheta}}\left(\boldsymbol{\varPsi}_j\vert \boldsymbol{\varPsi}_{j+1}, \boldsymbol{Y}\right)\right],\\
    &\nabla_{\hat{\boldsymbol{D}}_j} \mathbb{E}_{p\left(\boldsymbol{\varPsi}_j, \boldsymbol{\varPsi}_{j+1}\vert \boldsymbol{Y}\right)} \left[-\ln q_{\boldsymbol{\varTheta}}(\boldsymbol{\varPsi}_j\vert \boldsymbol{\varPsi}_{j+1}, \boldsymbol{Y})\right]
    \nonumber\\ ={}& 
    \mathbb{E}_{p\left(\boldsymbol{\varPsi}_j, \boldsymbol{\varPsi}_{j+1}\vert \boldsymbol{Y}\right)} \left[\nabla_{\boldsymbol{D}_j} \ln q_{\boldsymbol{\varTheta}}(\boldsymbol{\varPsi}_j\vert \boldsymbol{\varPsi}_{j+1}, \boldsymbol{Y})\right].
\end{align}
\end{subequations}
Thus, the update rules of $\hat{\boldsymbol{H}}_j$ and $\hat{\boldsymbol{D}}_j$ are given by
\begin{subequations}\label{update1}
\begin{align}
    \hat{\boldsymbol{H}}_j &\leftarrow \hat{\boldsymbol{H}}_j - 
    \frac{\epsilon_{H_j}}{L}\sum^{L}_{l=1}\nabla_{\boldsymbol{H}_j} \ln q_{\boldsymbol{\varTheta}}\left(\boldsymbol{\varPsi}_j\vert \boldsymbol{\varPsi}_{j+1}, \boldsymbol{Y}\right)\Big\vert_{\boldsymbol{\varPsi}_j=\boldsymbol{\varPsi}^{(l)}_j},\\
    \hat{\boldsymbol{D}}_j &\leftarrow \hat{\boldsymbol{D}}_j - 
    \frac{\epsilon_{D_j}}{L}\sum^{L}_{l=1}\nabla_{\boldsymbol{D}_j} \ln q_{\boldsymbol{\varTheta}}\left(\boldsymbol{\varPsi}_j\vert \boldsymbol{\varPsi}_{j+1}, \boldsymbol{Y}\right)\Big\vert_{\boldsymbol{\varPsi}_j=\boldsymbol{\varPsi}^{(l)}_j},
\end{align}
\end{subequations}
where $\epsilon_{H_j}$ and $\epsilon_{D_j}$ are the step sizes of gradient descent. The expectations of scores in \eqref{mean} are approximated by the average of $L$ samples in \eqref{update1}. 
Since $\boldsymbol{\varPsi}_{j+1}$ has been given by the $(j+1)$-th reverse step, we have $p\left(\boldsymbol{\varPsi}_j, \boldsymbol{\varPsi}_{j+1}\vert \boldsymbol{Y}\right) \propto p\left(\boldsymbol{\varPsi}_j\vert \boldsymbol{\varPsi}_{j+1}, \boldsymbol{Y}\right)$, and $\{\boldsymbol{\varPsi}^{(l)}_j\}^L_{l = 1}$ can thus be sampled from the variational approximation of $p\left(\boldsymbol{\varPsi}_j\vert \boldsymbol{\varPsi}_{j+1}, \boldsymbol{Y}\right)$ in \eqref{sample1}. 
To leverage $\boldsymbol{H}_{j+1}$, $\boldsymbol{D}_{j+1}$, $\boldsymbol{Y}$ and scores to aid gradient updates, we employ Bayes' rule and obtain $q_{\boldsymbol{\varTheta}}(\boldsymbol{\varPsi}_j\vert \boldsymbol{\varPsi}_{j+1}, \boldsymbol{Y}) = \frac{p(\boldsymbol{\varPsi}_{j+1} \vert\boldsymbol{\varPsi}_j)q_{\boldsymbol{\varTheta}}(\boldsymbol{Y}\vert\boldsymbol{\varPsi}_j)q_{\boldsymbol{\varTheta}}(\boldsymbol{\varPsi}_j)}{q_{\boldsymbol{\varTheta}}(\boldsymbol{\varPsi}_{j+1},\boldsymbol{Y})}$
, where $p(\boldsymbol{\varPsi}_{j+1} \vert\boldsymbol{\varPsi}_j) = p(\boldsymbol{\varPsi}_{j+1} \vert\boldsymbol{\varPsi}_j,\boldsymbol{Y})$. 
The resulting scores are
\begin{subequations}\label{BayesDecomp}
\begin{align}
    \nabla_{\boldsymbol{H}_j} \ln q_{\boldsymbol{\varTheta}}(\boldsymbol{\varPsi}_j\vert \boldsymbol{\varPsi}_{j+1}, \boldsymbol{Y}) &={} 
    \nabla_{\boldsymbol{H}_j} \ln p(\boldsymbol{\varPsi}_{j+1} \vert  \boldsymbol{\varPsi}_j)\nonumber\\
    + \nabla_{\boldsymbol{H}_j} \ln q_{\boldsymbol{\varTheta}}(\boldsymbol{\varPsi}_j) &+ \nabla_{\boldsymbol{H}_j} \ln q_{\boldsymbol{\varTheta}}(\boldsymbol{Y} \vert  \boldsymbol{\varPsi}_j),\\
    \nabla_{\boldsymbol{D}_j} \ln q_{\boldsymbol{\varTheta}}(\boldsymbol{\varPsi}_j\vert \boldsymbol{\varPsi}_{j+1}, \boldsymbol{Y}) &={} 
    \nabla_{\boldsymbol{D}_j} \ln p(\boldsymbol{\varPsi}_{j+1} \vert  \boldsymbol{\varPsi}_j)\nonumber\\
    + \nabla_{\boldsymbol{D}_j} \ln q_{\boldsymbol{\varTheta}}(\boldsymbol{\varPsi}_j) &+ \nabla_{\boldsymbol{D}_j} \ln q_{\boldsymbol{\varTheta}}(\boldsymbol{Y} \vert  \boldsymbol{\varPsi}_j).
\end{align}
\end{subequations}
In \eqref{BayesDecomp}, according to \eqref{conditionaldis}, the first terms on the right-hand sides of the equalities can be calculated explicitly as
\begin{subequations}\label{grad1}
\begin{align}
    \nabla_{\boldsymbol{H}_j} \ln p(\boldsymbol{\varPsi}_{j+1} \vert  \boldsymbol{\varPsi}_j) ={}& \frac{\boldsymbol{H}_{j+1}- \boldsymbol{H}_j}{\sigma^{2}_{H_{j+1}} - \sigma^{2}_{H_{j}}},\\
    \nabla_{\boldsymbol{D}_j} \ln p(\boldsymbol{\varPsi}_{j+1} \vert  \boldsymbol{\varPsi}_j) ={}& \frac{\boldsymbol{D}_{j+1} - \boldsymbol{D}_j}{\sigma^{2}_{D_{j+1}} - \sigma^{2}_{D_{j}}}.
\end{align}
\end{subequations}
The second terms on the right-hand sides of \eqref{BayesDecomp} are the pre-trained first-order score networks $\boldsymbol{S}_{\boldsymbol{\theta}_H}$ and $\boldsymbol{S}_{\boldsymbol{\theta}_D}$ given by \eqref{firstorderscore}.
The third terms on the right-hand sides of \eqref{BayesDecomp} are difficult to calculate due to the intractable variational likelihood function $q_{\boldsymbol{\varTheta}}(\boldsymbol{Y} \vert  \boldsymbol{\varPsi}_j)$. Following the approximations in DPS \cite{chung2023diffusion}, the likelihood scores can be calculated as
\begin{subequations}\label{likelihoodapprox1}
\begin{align}
    \nabla_{\boldsymbol{H}_j} \ln q_{\boldsymbol{\varTheta}}(\boldsymbol{Y} \vert  \boldsymbol{\varPsi}_j)
    ={}&\nabla_{\boldsymbol{H}_j} \ln p(\boldsymbol{Y} \vert  \hat{\boldsymbol{\varPsi}}_{0\vert j}),\\
    \nabla_{\boldsymbol{D}_j} \ln q_{\boldsymbol{\varTheta}}(\boldsymbol{Y} \vert  \boldsymbol{\varPsi}_j)
    ={}&\nabla_{\boldsymbol{D}_j} \ln p(\boldsymbol{Y} \vert  \hat{\boldsymbol{\varPsi}}_{0\vert j}),
\end{align}
\end{subequations}
where $\hat{\boldsymbol{\varPsi}}_{0\vert j} = \{\hat{\boldsymbol{H}}_{0\vert j}, \hat{\boldsymbol{D}}_{0\vert j}\}$, and
\begin{subequations}\label{0givenj}
\begin{align}
    \hat{\boldsymbol{H}}_{0\vert j} &= \mathbb{E}_{q_{\boldsymbol{\varTheta}}(\boldsymbol{H}_0 \vert  \boldsymbol{H}_j)} [\boldsymbol{H}_0] \nonumber\\ &= \boldsymbol{H}_j + \sigma^{2}_{H_j} \nabla_{\boldsymbol{H}_j} \ln q_{\boldsymbol{\varTheta}}(\boldsymbol{H}_j)\nonumber \\ &=\boldsymbol{H}_j + \sigma^{2}_{H_j}\boldsymbol{S}_{\boldsymbol{\theta}_H}(\boldsymbol{H}_j, \sigma_{H_j}),\\
    \hat{\boldsymbol{D}}_{0\vert j} &= \mathbb{E}_{q_{\boldsymbol{\varTheta}}\left(\boldsymbol{D}_0 \vert  \boldsymbol{D}_j\right)} [\boldsymbol{D}_0]  \nonumber\\ &=  \boldsymbol{D}_j + \sigma^{2}_{D_j} \nabla_{\boldsymbol{D}_j} \ln q_{\boldsymbol{\varTheta}}\left(\boldsymbol{D}_j\right) \nonumber\\ &= \boldsymbol{D}_j + \sigma^{2}_{D_j}\boldsymbol{S}_{\boldsymbol{\theta}_D}(\boldsymbol{D}_j, \sigma_{D_j}),
\end{align}
\end{subequations}
are respectively the MMSE estimates of the channel and the source data based on $\boldsymbol{\varPsi}_j$ according to Tweedie's formula \cite{efron2011tweedie}. The likelihood is approximated
as $p(\boldsymbol{Y}\vert \hat{\boldsymbol{\varPsi}}_{0\vert j}) = \mathcal{CN}(\boldsymbol{Y}; \hat{\boldsymbol{H}}_{0\vert j}f_{\boldsymbol{\gamma}}(\hat{\boldsymbol{D}}_{0\vert j}), \sigma^2_n)$. However, the nonlinear coupling between $\hat{\boldsymbol{H}}_{0\vert j}$ and $\hat{\boldsymbol{D}}_{0\vert j}$ caused by the nonlinear encoder $f_{\boldsymbol{\gamma}}$ and the MIMO transmission process will lead to the estimation errors of \eqref{0givenj} to overlap, which results in larger approximation errors of the likelihood scores.
To mitigate such errors, we calibrate the likelihood approximation as follows.
The channel, the source data, and their MMSE estimates satisfy
\begin{subequations}
    \begin{align}
        \boldsymbol{H}_0 ={}& \hat{\boldsymbol{H}}_{0\vert j} + \Delta \boldsymbol{H}_{0\vert j},\\
        \boldsymbol{D}_0 ={}& \hat{\boldsymbol{D}}_{0\vert j} + \Delta \boldsymbol{D}_{0\vert j},
    \end{align}
\end{subequations}
where $\Delta \boldsymbol{H}_{0\vert j}$ (or $\Delta \boldsymbol{D}_{0\vert j}$) is the AWGN independent of $\hat{\boldsymbol{H}}_{0\vert j}$ (or $\hat{\boldsymbol{D}}_{0\vert j}$). The variances of $\Delta \boldsymbol{H}_{0\vert j}$ and $\Delta \boldsymbol{D}_{0\vert j}$ are
\begin{subequations}\label{secondorder}
    \begin{align}
        \sigma^{2}_{H_{0\vert j}} &=\sigma^{2}_{H_j} + \sigma^{4}_{H_j}s_{\boldsymbol{\theta}_{H}}(\boldsymbol{H}_j,\sigma_{H_j})/N_rN_tK^2,\\ 
        \sigma^{2}_{D_{0\vert j}} &=\sigma^{2}_{D_j} + \sigma^{4}_{D_j}s_{\boldsymbol{\theta}_{D}}(\boldsymbol{D}_j,\sigma_{D_j})/n,
    \end{align}
\end{subequations}
according to Tweedie's formula,
where $s_{\boldsymbol{\theta}_{H}}$ and $s_{\boldsymbol{\theta}_{D}}$ are the pre-trained second-order trace scores networks given in \eqref{secondorderscore}, and $n$ is the dimension of the source data.
Recall the system model in \eqref{systemmodel1}-\eqref{encoder}. We thus have the following approximation:
\begin{subequations}
\begin{align}
    \boldsymbol{Y} = {}& \boldsymbol{H}_0 f_{\boldsymbol{\gamma}}(\boldsymbol{D}_0) + \boldsymbol{N} \nonumber \\
    ={}& (\hat{\boldsymbol{H}}_{0\vert j} + \Delta \boldsymbol{H}_{0\vert j})f_{\boldsymbol{\gamma}}(\hat{\boldsymbol{D}}_{0\vert j} + \Delta \boldsymbol{D}_{0\vert j}) + \boldsymbol{N} \nonumber \\
    \approx{}& (\hat{\boldsymbol{H}}_{0\vert j} + \Delta \boldsymbol{H}_{0\vert j})(f_{\boldsymbol{\gamma}}(\hat{\boldsymbol{D}}_{0\vert j}) \nonumber\\&+ \mathcal{J}_{f_{\boldsymbol{\gamma}}}(\hat{\boldsymbol{D}}_{0\vert j})\Delta \boldsymbol{D}_{0\vert j}) + \boldsymbol{N} \label{Taylor} \\
    ={}& \hat{\boldsymbol{H}}_{0\vert j}f_{\boldsymbol{\gamma}}(\hat{\boldsymbol{D}}_{0\vert j}) + \boldsymbol{N} \nonumber \\ &+ \Delta\boldsymbol{H}_{0\vert j}f_{\boldsymbol{\gamma}}(\hat{\boldsymbol{D}}_{0\vert j}) + \hat{\boldsymbol{H}}_{0\vert j}\mathcal{J}_{f_{\boldsymbol{\gamma}}}(\hat{\boldsymbol{D}}_{0\vert j})\Delta \boldsymbol{D}_{0\vert j} \nonumber \\ &+ \Delta \boldsymbol{H}_{0\vert j}\mathcal{J}_{f_{\boldsymbol{\gamma}}}(\hat{\boldsymbol{D}}_{0\vert j})\Delta \boldsymbol{D}_{0\vert j} \label{threeterms}\\
    ={}& \hat{\boldsymbol{H}}_{0\vert j}f_{\boldsymbol{\gamma}}(\hat{\boldsymbol{D}}_{0\vert j}) + \Delta \boldsymbol{N}_{0\vert j} + \boldsymbol{N},
\end{align}
\end{subequations}
where we apply first-order Taylor expansion to $f_{\boldsymbol{\gamma}}\left(\boldsymbol{D}_0 \right)$ at $\hat{\boldsymbol{D}}_{0\vert j}$ in \eqref{Taylor}, with $\mathcal{J}_{f_{\boldsymbol{\gamma}}}(\hat{\boldsymbol{D}}_{0\vert j})$ being the Jacobian matrix. The last three terms in \eqref{threeterms} are aggregated into the noise $\Delta \boldsymbol{N}_{0\vert j}$, which is approximated as AWGN with mean
zero and variance
\begin{align}
    \sigma_{\Delta N_{0\vert j}}^2 = 
    \frac{\left\Vert\Delta\boldsymbol{N}_{0\vert j}\right\Vert_F^2}{N_rKT}.
\end{align}
Therefore, the likelihood scores in \eqref{likelihoodapprox1} are calculated as
\begin{subequations}\label{grad3}
\begin{align}
    \hspace{-0.1cm}\nabla_{\boldsymbol{H}_j} \ln q_{\boldsymbol{\varTheta}}(\boldsymbol{Y} \vert  \boldsymbol{\varPsi}_j)={}&\nabla_{\boldsymbol{H}_j} \frac{-\Vert\boldsymbol{Y} - \hat{\boldsymbol{H}}_{0\vert j}f_{\boldsymbol{\gamma}}(\hat{\boldsymbol{D}}_{0\vert j})\Vert_F^2}{\sigma_{\Delta N_{0\vert j}}^2+\sigma_n^2},\\
    \hspace{-0.1cm}\nabla_{\boldsymbol{D}_j} \ln q_{\boldsymbol{\varTheta}}(\boldsymbol{Y} \vert  \boldsymbol{\varPsi}_j)={}&\nabla_{\boldsymbol{D}_j} \frac{-\Vert\boldsymbol{Y} - \hat{\boldsymbol{H}}_{0\vert j}f_{\boldsymbol{\gamma}}(\hat{\boldsymbol{D}}_{0\vert j})\Vert_F^2}{\sigma_{\Delta N_{0\vert j}}^2+\sigma_n^2},
\end{align}
\end{subequations}
where the estimation errors of $\hat{\boldsymbol{H}}_{0\vert j}$ and $\hat{\boldsymbol{D}}_{0\vert j}$ are taken into consideration in the likelihood approximation. 
Finally, the scores in \eqref{update1} is obtained by using \eqref{BayesDecomp}, \eqref{firstorderscore}, \eqref{grad1}, and \eqref{grad3}.

\subsection{Overall Algorithm 
}
We summarize the details of the PVD algorithm for the blind receiver design of the Blind-MIMOSC framework in Algorithm \ref{alg1}. 
PVD simultaneously recovers the channel and the source data from input signal $\boldsymbol{Y}$ by leveraging the known DJSCC encoder $f_{\boldsymbol{\gamma}}$ and pre-trained score networks parameterized by $\boldsymbol{\varTheta}$. During the $J$ outer loops, PVD executes the reverse diffusion process. Within the $j$-th reverse diffusion step, PVD performs gradient descent during $J_{\text{in}}$ inner loops to optimize the means of latent variables $\boldsymbol{H}_j$ and $\boldsymbol{D}_j$, which will be used in the next reverse step. After $J$ reverse steps, PVD outputs variational means $\hat{\boldsymbol{H}}_0$ and $\hat{\boldsymbol{D}}_0$ as the recovered channel and source data, respectively. 


Unlike end-to-end schemes such as DJSCC-MIMO \cite{wu2024deep}, PVD leverages pre-trained, unconditional score-based models in a plug-and-play fashion, preventing costly joint retraining whenever source datasets or channel conditions vary.  
Traditional score-based approaches like DPS \cite{chung2023diffusion} rely on stochastic sample generation to recover the source data, but random posterior sampling is likely to fail under the complicated nonlinear coding and channel mixing conditions. In contrast, PVD minimizes the cross-entropy by optimizing the variational means via deterministic gradient descent over the reverse diffusion steps. This deterministic optimization delivers more stable and faster reconstructions compared to DPS. 


The parameter settings in Algorithm \ref{alg1} are specified as follows. 
By fixing $\sigma_{H_J}$, $\sigma_{D_J}$, $\sigma_{H_1}$, and $\sigma_{D_1}$, the pre-defined noise variances follow an exponential interpolation as $\sigma_{H_{j}} = \sigma_{H_{1}}(\frac{\sigma_{H_J}}{\sigma_{H_{1}}})^{\frac{j}{J}}$ and $\sigma_{D_{j}} = \sigma_{D_{1}}(\frac{\sigma_{D_J}}{\sigma_{D_{1}}})^{\frac{j}{J}}$. The step sizes are set as $\epsilon_{H_j} = \zeta_{H}(\sigma^2_{H_{j+1}}-\sigma^2_{H_j})$ and $\epsilon_{D_j} = \zeta_{D}(\sigma^2_{D_{j+1}}-\sigma^2_{D_j})$, where $\zeta_{H}$ and $\zeta_{D}$ are fixed hyper-parameters. This formulation ensures that step sizes diminish progressively during the reverse diffusion process. For initialization, $\boldsymbol{H}_J$ and $\hat{\boldsymbol{H}}_{J-1}$ are drawn from zero-mean complex Gaussian distributions with variances $\sigma^2_{H_J}$ and $\sigma^2_{H_{J-1}}$, respectively. Similarly, $\boldsymbol{D}_J$ and $\hat{\boldsymbol{D}}_{J-1}$ are drawn from zero-mean Gaussian distributions with variances $\sigma^2_{D_J}$ and $\sigma^2_{D_{J-1}}$, respectively.

\begin{algorithm}[htb]
    \caption{PVD Algorithm}\label{alg1}
    \KwIn{$\boldsymbol{\gamma}$, $\boldsymbol{\varTheta}$, $\boldsymbol{Y}$, $\sigma^2_n$, $J$, $J_{\text{in}}$, $\{\sigma_{H_j},\sigma_{D_j}\}^{J}_{j=0}$, $L$, $\{\epsilon_{H_j},\epsilon_{D_j}\}^{J-1}_{j=0}$;}
    \textbf{Initialization:} $\boldsymbol{H}_J$, $\boldsymbol{D}_J$, $\hat{\boldsymbol{H}}_{J-1}$, $\hat{\boldsymbol{D}}_{J-1}$;\\
    \For{$j = J-1 : 0$}
        {
        Update $\Lambda_{H_j}$ and $\Lambda_{D_j}$ by \eqref{Lambdas}; \\
            \For{$j_{\text{in}}=1:J_{\text{in}}$}
            {
            Sample $\boldsymbol{H}_j$ and $\boldsymbol{D}_j$ from \eqref{sample1};\\
            Update $\hat{\boldsymbol{H}}_{0\vert j}$ and $\hat{\boldsymbol{D}}_{0\vert j}$ by using \eqref{0givenj}; \\
            Update $\hat{\boldsymbol{H}}_j$ and $\hat{\boldsymbol{D}}_j$ by using \eqref{update1};\\
            }
        $\boldsymbol{H}_j = \hat{\boldsymbol{H}}_j$, $\boldsymbol{D}_j = \hat{\boldsymbol{D}}_j$,\\ $\hat{\boldsymbol{H}}_{j-1} = \hat{\boldsymbol{H}}_j$, $\hat{\boldsymbol{D}}_{j-1} = \hat{\boldsymbol{D}}_j$
        }
    \KwOut{$\hat{\boldsymbol{H}}_0$, $\hat{\boldsymbol{D}}_0$.}
\end{algorithm}

\section{Extension to Multi-User MIMO}
In this section, we consider a MIMO system with multiple users. First, we modify the system model \eqref{systemmodel1} by superposing signals from multiple users. Then, parallel variational diffusion is extended to multi-user systems to recover multiple channels and source data simultaneously.

\subsection{Multi-user System Model}
Consider a multi-user MIMO uplink system where the receiver, equipped with $N_r$ antennas, communicates to $N_u$ transmitters, each equipped with $N_t$ antennas. The compound MIMO channel between the $i$-th user and the receiver is denoted by $\boldsymbol{H}^{(i)}_0 \in \mathbb{C}^{N_rK \times N_tK}$. The source data $\boldsymbol{D}^{(i)}_{0}$ of the $i$-th user is mapped into the transmitted signal matrix 
\begin{align}\label{encoder2}
    \boldsymbol{X}^{(i)} = f_{\boldsymbol{\gamma}_i}(\boldsymbol{D}^{(i)}_0)\in\mathbb{C}^{N_tK\times T},
\end{align} 
through a DJSCC encoder $f_{\boldsymbol{\gamma}_i}$ parameterized by $\boldsymbol{\gamma}_i$.
At the receiver, a superposition of signals from all $N_u$ users across all blocks is represented by
\begin{align}\label{systemmodel2}
    \boldsymbol{Y}
    = \sum\nolimits_{i = 1}^{N_u}\boldsymbol{H}^{(i)}_{0}\boldsymbol{X}^{(i)} + \boldsymbol{N},
\end{align}
where $\boldsymbol{N} \in \mathbb{C}^{N_rK \times T}$ is AWGN. The diffusion-based blind receiver $g_{\boldsymbol{\varTheta}}$ parameterized by $\boldsymbol{\varTheta}$ jointly estimates the channels $\{\boldsymbol{H}^{(i)}_0\}^{N_u}_{i=1}$ and the source data $\{\boldsymbol{D}^{(i)}_0\}^{N_u}_{i=1}$ directly from the received signal $\boldsymbol{Y}$.

\subsection{Extended Parallel Variational Diffusion}
Based on the channel model in \eqref{systemmodel2} and the blind receiver design in Section~\eqref{Section3}, the diffusion-based blind receiver in the multi-user MIMO semantic communication system can be achieved by minimizing the cross-entropy as
\begin{align}\label{optimizeP6}
    \min_{q_{\boldsymbol{\varTheta}}} \,\,
    \mathbb{E}_{p(\boldsymbol{Y},\{\boldsymbol{H}^{(i)}_0,\boldsymbol{D}^{(i)}_0\}^{N_u}_{i=1})}[-\ln q_{\boldsymbol{\varTheta}}(\{\boldsymbol{H}^{(i)}_0,\boldsymbol{D}^{(i)}_0\}^{N_u}_{i=1}\vert\boldsymbol{Y})].
\end{align}
To exploit score-based models, we introduce the latent variables of the channels and the source data as $\{\boldsymbol{H}^{(i)}_j, \boldsymbol{D}^{(i)}_j\}^{N_u,J}_{i=1,j=1}$. The pre-defined noise variances for the latent variables are $\sigma^{2}_{H^{(i)}_{J}} > \cdots> \sigma^{2}_{H^{(i)}_{1}} > \sigma^{2}_{H^{(i)}_{0}} = 0$ and $\sigma^{2}_{D^{(i)}_{J}} > \cdots> \sigma^{2}_{D^{(i)}_{1}} > \sigma^{2}_{D^{(i)}_{0}} = 0$ for any $i$. 
By redefining $\boldsymbol{\varPsi}_j \triangleq \{\boldsymbol{H}^{(i)}_j, \boldsymbol{D}^{(i)}_j\}^{N_u}_{i=1}$, the derivations for the reverse process in Section~\ref{RPPVD} literarily hold for the multi-user case. Specifically, applying propositions \ref{proposition_transform} and \ref{proposition_KL} to \eqref{optimizeP6}, the optimization problem is transformed to
\begin{align}\label{optimizeP7}
    \min_{q_{\boldsymbol{\varTheta}}} \quad & \mathbb{E}_{p\left(\boldsymbol{\varPsi}_j, \boldsymbol{\varPsi}_{j+1}\vert \boldsymbol{Y}\right)}[-\ln q_{\boldsymbol{\varTheta}}\left(\boldsymbol{\varPsi}_j\vert \boldsymbol{\varPsi}_{j+1}, \boldsymbol{Y}\right)]
    ,\nonumber \\ & \quad\quad {j = J-1, \cdots, 0.} 
\end{align}
At each reverse step, we optimize the mean values $\{\hat{\boldsymbol{H}}^{(i)}_j,\hat{\boldsymbol{D}}^{(i)}_j\}^{N_u}_{i=1}$ of the variational Gaussian distribution 
\begin{align}\label{sample1???}
    &q_{\boldsymbol{\varTheta}}(\boldsymbol{\varPsi}_j\vert \boldsymbol{\varPsi}_{j+1}, \boldsymbol{Y})\nonumber\\
    ={}& \prod^{N_u}_{i=1}\mathcal{CN}(\boldsymbol{H}^{(i)}_j; \hat{\boldsymbol{H}}^{(i)}_j, \Lambda_{H^{(i)}_j}^{-1})
    \mathcal{N}(\boldsymbol{D}^{(i)}_j; \hat{\boldsymbol{D}}^{(i)}_j, \Lambda_{D^{(i)}_j}^{-1}),
\end{align}
to solve \eqref{optimizeP7}, where precisions $\{\Lambda_{H^{(i)}_j}, \Lambda_{D^{(i)}_j}\}^{N_u,J-1}_{i=1,j=0}$ can be approximately obtained similarly to \eqref{Lambdas}.
The gradients of the objective in \eqref{optimizeP7} to the variational means are given by
\begin{subequations}\label{mean???}
\begin{align}
    &\nabla_{\hat{\boldsymbol{H}}^{(i)}_j} \mathbb{E}_{p(\boldsymbol{\varPsi}_j, \boldsymbol{\varPsi}_{j+1}\vert \boldsymbol{Y})} \left[-\ln q_{\boldsymbol{\varTheta}}\left(\boldsymbol{\varPsi}_j\vert \boldsymbol{\varPsi}_{j+1}, \boldsymbol{Y}\right)\right] 
    \nonumber\\ ={}& 
    \mathbb{E}_{p(\boldsymbol{\varPsi}_j, \boldsymbol{\varPsi}_{j+1}\vert \boldsymbol{Y})} [\nabla_{\boldsymbol{H}^{(i)}_j} \ln q_{\boldsymbol{\varTheta}}\left(\boldsymbol{\varPsi}_j\vert \boldsymbol{\varPsi}_{j+1}, \boldsymbol{Y}\right)],\\
    &\nabla_{\hat{\boldsymbol{D}}^{(i)}_j} \mathbb{E}_{p(\boldsymbol{\varPsi}_j, \boldsymbol{\varPsi}_{j+1}\vert \boldsymbol{Y})} [-\ln q_{\boldsymbol{\varTheta}}(\boldsymbol{\varPsi}_j\vert \boldsymbol{\varPsi}_{j+1}, \boldsymbol{Y})] 
    \nonumber\\ ={}& 
    \mathbb{E}_{p(\boldsymbol{\varPsi}_j, \boldsymbol{\varPsi}_{j+1}\vert \boldsymbol{Y})} [\nabla_{\boldsymbol{D}^{(i)}_j} \ln q_{\boldsymbol{\varTheta}}(\boldsymbol{\varPsi}_j\vert \boldsymbol{\varPsi}_{j+1}, \boldsymbol{Y})],
\end{align}
\end{subequations}
for $i\in\{1,\cdots,N_u\}$, which can be calculated by employing Bayes' rule as in \eqref{BayesDecomp}. 
The update rules of the variational means are thus given by
\begin{subequations}\label{update???}
\begin{align}
    \hat{\boldsymbol{H}}^{(i)}_j &\leftarrow \hat{\boldsymbol{H}}^{(i)}_j - 
    \frac{\epsilon_{H^{(i)}_j}}{L}\sum^{L}_{l=1}\nabla_{\boldsymbol{H}^{(i)}_j} \ln q_{\boldsymbol{\varTheta}}\left(\boldsymbol{\varPsi}_j\vert \boldsymbol{\varPsi}_{j+1}, \boldsymbol{Y}\right)\Big\vert_{\boldsymbol{\varPsi}_j=\boldsymbol{\varPsi}^{(l)}_j},\\
    \hat{\boldsymbol{D}}^{(i)}_j &\leftarrow \hat{\boldsymbol{D}}^{(i)}_j - 
    \frac{\epsilon_{D^{(i)}_j}}{L}\sum^{L}_{l=1}\nabla_{\boldsymbol{D}^{(i)}_j} \ln q_{\boldsymbol{\varTheta}}\left(\boldsymbol{\varPsi}_j\vert \boldsymbol{\varPsi}_{j+1}, \boldsymbol{Y}\right)\Big\vert_{\boldsymbol{\varPsi}_j=\boldsymbol{\varPsi}^{(l)}_j},
\end{align}
\end{subequations}
where $\{\epsilon_{H^{(i)}_j},\epsilon_{D^{(i)}_j}\}^{N_u,J-1}_{i=1,j=0}$ are the step sizes, and $\{\boldsymbol{\varPsi}^{(l)}_j\}^L_{l = 1}$ can be sampled from the variational distribution in \eqref{sample1???}. Following the proposed reverse diffusion process of the PVD model, the channels and the source data can be simultaneously recovered via \eqref{update???} from $j=J-1$ to $j=0$.

\section{Numerical Results}\label{NR}
In this section, we conduct numerical experiments to evaluate the performance of our proposed Blind-MIMOSC framework with the PVD model over MIMO channels in various fading and SNR scenarios. Specifically, we consider the image source data for wireless transmission and conduct experiments on the FFHQ-256 validation dataset \cite{karras2019style}. The DJSCC encoders
are pre-trained based on the Swin Transformer backbone \cite{yang2024swinjscc}. The score-based models for the channel and image source are pre-trained based on the improved noise conditional score networks (NCSN++) \cite{song2021scorebased}. The parameters in Algorithm \ref{alg1} are $J=30$, $J_{\text{in}}=20$, $L=1$, $\sigma_{H^{(i)}_{J}} = \sigma_{D^{(i)}_{J}} = 100$ and $\sigma_{H^{(i)}_{1}} = \sigma_{D^{(i)}_{1}} = 0.01$ for $i\in\{1,\cdots,N_u\}$. The step sizes are tuned independently across different experimental setups to optimize performance. The experiments are conducted on a server equipped with 10 NVIDIA GeForce RTX 3090 GPUs.

\subsection{Baselines}
To comprehensively evaluate the proposed method, we perform comparative analyses against the existing transmission strategies listed below:

\begin{enumerate}
    \item \textbf{BPG-LDPC + pilots:} 
    This scheme uses the BPG codec 
    for source compression, 5G LDPC codes with a block length of 4096 for channel coding, and modulation formats compliant with 3GPP TS 38.214.
    Pilot symbols are used for channel estimation.
    An exhaustive search over all feasible combinations of coding rates and modulation constellation sizes is performed to identify the optimal configuration, serving as the benchmark.

    \item \textbf{DPS-MIMO + pilots:} 
    This system utilizes our proposed DJSCC encoder $f_{\boldsymbol{\gamma}}$ at the transmitter. At the receiver, channel estimation is performed using pilots, and source reconstruction is achieved through DPS \cite{chung2023diffusion} based on the estimated channel and the encoder. 
    
    \item \textbf{DJSCC-MIMO + pilots:} The DJSCC-MIMO codec \cite{wu2024deep} with pilot-based channel estimation are employed.
    
    \item \textbf{DJSCC + Pro-BiG-AMP:} This scheme integrates our proposed DJSCC encoder $f_{\boldsymbol{\gamma}}$ and the projection-based BiG-AMP (Pro-BiG-AMP) algorithm for blind channel estimation and signal detection \cite{zhang2018blind}. $\lceil 1 + \log N_u\rceil N_t$ reference symbols are inserted into the transmitted signal in each transmission block to remove ambiguity.
    
    \item \textbf{DJSCC-MIMO + Perfect CSI:} As an ideal reference, the DJSCC-MIMO codec \cite{wu2024deep} with perfect CSI known at the receiver is included.
    
    \item \textbf{BPG-Capacity + Perfect CSI:} As an ideal reference, this framework assumes capacity-achieving channel codes and uses BPG as the source codec. 
    Transmitters are assumed to know the instantaneous capacity of each channel block and select the optimal compression rate of BPG based on it.

\end{enumerate}



\begin{figure*}[htb]
    \centering
    \setlength{\tabcolsep}{0.45pt} 
    \renewcommand{\arraystretch}{0} 
    \begin{tabular}{ccccccc}
        & 
        \multicolumn{1}{c}{\scriptsize \textbf{Blind-MIMOSC + }} & 
        \multicolumn{1}{c}{\scriptsize BPG-Capacity + } & 
        \multicolumn{1}{c}{\scriptsize BPG-LDPC + } & 
        \multicolumn{1}{c}{\scriptsize DJSCC-MIMO + } & 
        \multicolumn{1}{c}{\scriptsize DJSCC-MIMO + } & 
        \multicolumn{1}{c}{\scriptsize DPS-MIMO + } \\[2pt]

        \multicolumn{1}{c}{\scriptsize Ground truth} & 
        \multicolumn{1}{c}{\scriptsize \textbf{PVD}} & 
        \multicolumn{1}{c}{\scriptsize Perfect CSI}& 
        \multicolumn{1}{c}{\scriptsize $2N_t$ pilots} & 
        \multicolumn{1}{c}{\scriptsize Perfect CSI} & 
        \multicolumn{1}{c}{\scriptsize $2N_t$ pilots} & 
        \multicolumn{1}{c}{\scriptsize $2N_t$ pilots} \\[2pt]
        
        \multicolumn{1}{c}{} & 
        \multicolumn{1}{c}{\scriptsize $\text{CBR}=0.023$} & 
        \multicolumn{1}{c}{\scriptsize $\text{CBR}=0.023$} & 
        \multicolumn{1}{c}{\scriptsize $\text{CBR}=0.035$ {\color{red}$\uparrow 51.7\%$}} & 
        \multicolumn{1}{c}{\scriptsize $\text{CBR}=0.023$} & 
        \multicolumn{1}{c}{\scriptsize $\text{CBR}=0.035$ {\color{red}$\uparrow 52.2\%$}} &
        \multicolumn{1}{c}{\scriptsize $\text{CBR}=0.035$ {\color{red}$\uparrow 52.2\%$}} \\[2pt]
        
        \includegraphics[width=0.141\linewidth]{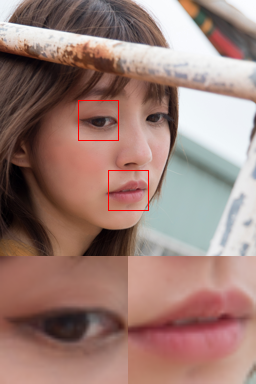} &
        \includegraphics[width=0.141\linewidth]{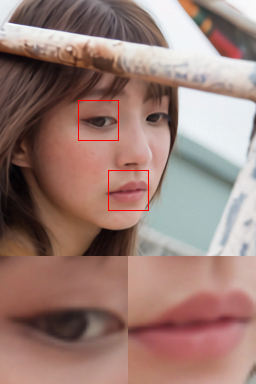} &
        \includegraphics[width=0.141\linewidth]{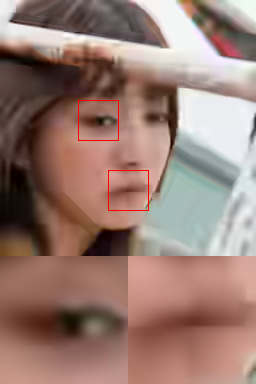} &
        \includegraphics[width=0.141\linewidth]{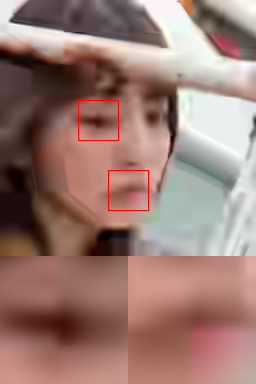}  &
        \includegraphics[width=0.141\linewidth]{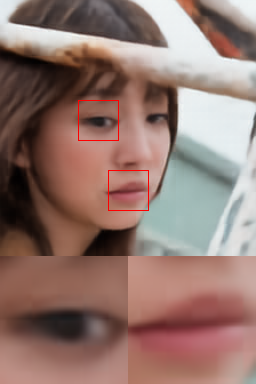} &
        \includegraphics[width=0.141\linewidth]{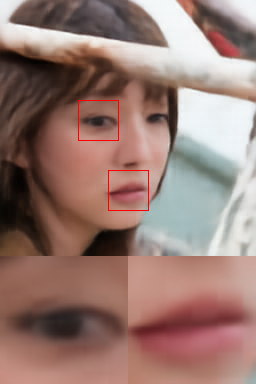} &
        \includegraphics[width=0.141\linewidth]{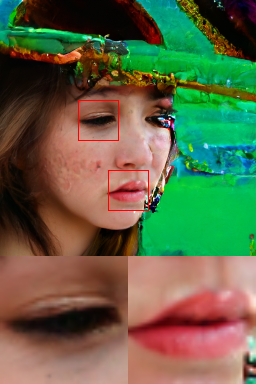}
    \end{tabular}
    \caption{Visual comparisons of image source under FFHQ-256 validation set and Rayleigh fading channel dataset. Parameters are set as $N_r=4$, $N_u = 4$, $N_t = 1$, SNR $=20$ dB. The channel remains constant for $T=24$ channel uses. 
    The original images are displayed in the first column. The remaining columns show the images recovered by Blind-MIMOSC + PVD, BPG-Capacity + Perfect CSI, BPG-LDPC + $2N_t$ pilots, DJSCC-MIMO + perfect CSI, DJSCC-MIMO + $2N_t$ pilots, and DPS-MIMO + $2N_t$ pilots, respectively. 
    Note that DJSCC + Pro-BiG-AMP always fails to recover the images in this setting, so we omit it to save space.
    }
    \vspace{-0.3cm}
    \label{imagecomp}
\end{figure*}

\subsection{Performance Metrics}
Given the channel output $\boldsymbol{Y}$, the receiver reconstructs the source images as $\hat{\boldsymbol{D}}^{(i)}_0\in\mathbb{R}^{256\times 256\times3}$ and the channels as $\hat{\boldsymbol{H}}^{(i)}_0\in \mathbb{C}^{N_rK \times N_tK}$ for any $i=1,\cdots,N_u$. The reconstruction quality of images is quantified through three metrics:
1) \textbf{MS-SSIM} \cite{wang2003multiscale} measures the quality of images by comparing luminance, contrast, and structural information between the original and reconstructed images. A higher MS-SSIM score indicates a better structural similarity; 2) \textbf{DISTS} \cite{ding2020image} is designed to capture both the structural and texture differences between the original and reconstructed images using deep features. A lower DISTS score indicates a higher similarity in both structure and perception; 3) \textbf{LPIPS} \cite{Zhang_2018_CVPR} computes the dissimilarity between the feature vectors of the original and reconstructed images. A lower LPIPS score implies a better perceptual quality.
The channel estimation performance is measured via NMSE, defined as
\begin{align}
    \text{NMSE} = 10\log_{10} \left(\sum^{N_u}_{i=1}\frac{\Vert  \boldsymbol{H}^{(i)}_0-\hat{\boldsymbol{H}}^{(i)}_{0}\Vert^2_F}{N_u\Vert  \boldsymbol{H}^{(i)}_0\Vert^2_F}\right) .
\end{align} 
A lower NMSE implies a better estimation accuracy of the channel matrix.
The channel bandwidth ratio is defined as $\text{CBR} \triangleq \frac{N_tKT}{n}$ \cite{wu2024deep}, representing the number of signal matrix elements per source data dimension, where $n=256\times 256\times 3$. 
We use black ``$\uparrow$'' and ``$\downarrow$'' to mark that lower or higher values of the corresponding metrics represent better quality. 
To measure channel quality, we define the channel SNR as
\begin{align}
    \text{SNR} = 10\log_{10} \frac{\Vert  \sum^{N_u}_{i=1}\boldsymbol{H}^{(i)}_0\boldsymbol{X}^{(i)}\Vert^2_F}{\Vert  \boldsymbol{N}\Vert^2_F} .
\end{align} 
All the results are conducted by averaging over at least 300 Monte Carlo trials.

\subsection{Rayleigh Fading Channel Model}
We first adopt the Rayleigh fading channel model for channel generation, i.e., the channel matrix in our experiment is generated using i.i.d. complex Gaussian random variables. To ensure a fair comparison, we employ the linear minimum mean-squared error (LMMSE) channel estimator for pilot-based approaches, which is optimal in the NMSE sense under Rayleigh fading channels.

We compare the performance of our proposed algorithm with several baselines. The parameters are set as follows. The antenna array size of the receiver $N_r = 4$, the number of users $N_u = 4$, the antenna array size of each user $N_t = 1$, and SNR $=20$ dB. Each image is encoded into $4608$ symbols for transmission, and the channel remains constant for $T = 24$ channel uses\footnote{This configuration references typical parameters from the 3GPP 5G NR standard: with a subcarrier spacing of 30 kHz,
the total duration of 24 symbols corresponds to 0.8 ms, which serves as a reasonable setting of the practical channel coherence time.}. For our proposed Blind-MIMOSC framework, the number of transmission blocks is $K = \frac{4608}{N_tT}$. 
For the pilot-based algorithms, the transmitted signal matrix is $\boldsymbol{X} = [\boldsymbol{X}_p, f_{\boldsymbol{\gamma}}(\boldsymbol{D}_0)]$, where $\boldsymbol{X}_p\in\mathbb{C}^{N_tK\times N_p}$ is the pilot matrix with $N_p$ being the number of pilots transmitted by each antenna in a transmission block. The dimension of the encoder output matrix will be $N_tK \times (T-N_p)$, accordingly. Thus, the number of transmission blocks becomes $K = \frac{4608}{N_t(T-N_p)}$.

Fig. \ref{imagecomp} illustrates the recovered images for visual comparisons.
The first column displays the original images,
while the remaining columns show the images recovered by Blind-MIMOSC + PVD, BPG-Capacity + Perfect CSI, BPG-LDPC + $2N_t$ pilots, DJSCC-MIMO + perfect CSI, DJSCC-MIMO + $2N_t$ pilots, and DPS-MIMO + $2N_t$ pilots, respectively. Note that Pro-BiG-AMP \cite{zhang2018blind} fails to recover the images in this setup, and thus is omitted for brevity. Fig. \ref{imagecomp} shows that our proposed Blind-MIMOSC with PVD delivers visually more pleasing results, such as the details around the eyes and mouth, at a much lower CBR.
\begin{figure}[htb]
    \centering
        \includegraphics[width=0.99\linewidth]{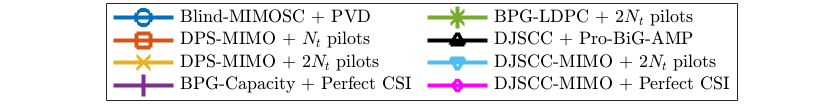}
        \includegraphics[width=0.49\linewidth]{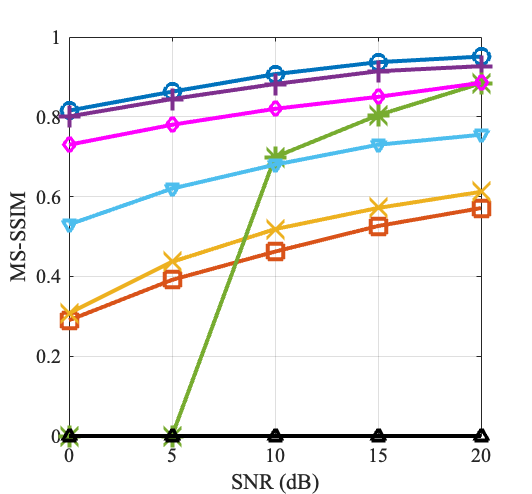}
        \includegraphics[width=0.49\linewidth]{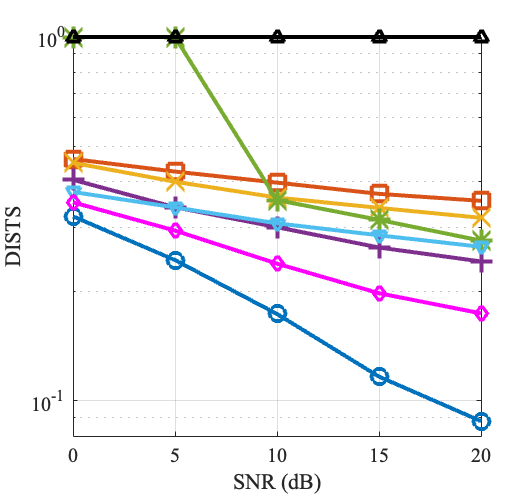}\\
        \includegraphics[width=0.49\linewidth]{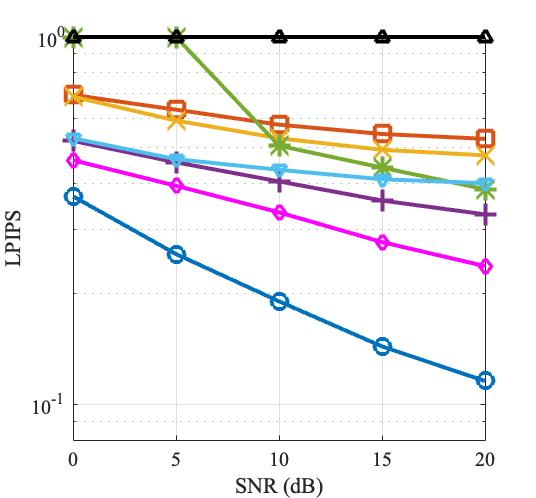}
        \includegraphics[width=0.49\linewidth]{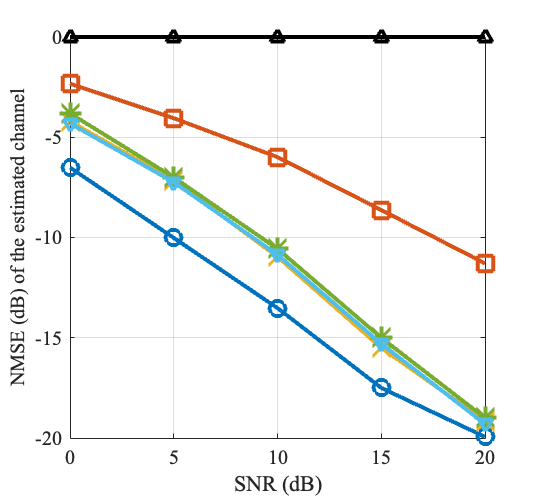}
    \vspace{-0.3cm}
    \caption{MS-SSIM, DISTS, LPIPS of recovered images and NMSE (dB) of channel estimation vs. different SNR levels over the FFHQ-256 validation set. Parameters are set as $N_r = 4$, $N_u = 4$, and $N_t = 1$.}
    \vspace{-0.15cm}
    \label{PLN}
\end{figure}
The quantitative results of the reconstruction quality of the source, such as MS-SSIM, DISTS, LPIPS, and channel estimation NMSE versus different SNRs are shown in Fig. \ref{PLN}. These results indicate that our Blind-MIMOSC with PVD dominates among all practical baselines with lower CBR, achieving at least $7.6\%$ higher MS-SSIM, $66.8\%$ lower DISTS, and $69.9\%$ lower LPIPS compared to the second-best practical approach.
Compared to the ideal reference, BPG-Capacity + Perfect CSI, Blind-MIMOSC with PVD preserves a competitive performance in terms of MS-SSIM, and a much better performance in DISTS and LPIPS metrics, which implies that PVD can better recover the details of the source data.
As for channel estimation, Blind-MIMOSC drastically reduces NMSE by up to 8.9 dB compared to existing schemes.
This gain arises from PVD’s ability to exploit the structure in the source data to aid channel estimation.

We also test the proposed Blind-MIMOSC in the massive MIMO case, where $N_r = 64$, $N_t = 8$, $N_u = 1$, and SNR $=0$ dB. Each image is encoded into $4608$ symbols for transmission, and the channel remains constant for $T=24$ channel uses. The resulting quantitative results are shown in the Table. \ref{massiveMIMO}. It can be seen that in the massive MIMO scenario, the advantages of Blind-MIMOSC are further highlighted. In terms of source recovery performance, Blind-MIMOSC continues to maintain a significant advantage over the practical pilot-based approaches, with at least $67.1\%$ higher MS-SSIM, $67.0\%$ lower DISTS, and $65.1\%$ lower LPIPS. The NMSE of channel estimates of the Blind-MIMOSC is reduced by 2.7 dB and 0.9 dB compared to pilot‑based approaches employing $N_p = N_t$ and $N_p = 2N_t$ pilots, respectively. While in terms of transmission efficiency, since the pilots in pilot-based approaches increase linearly with the number of transmit antennas, the CBR of Blind-MIMOSC can be only \textbf{1/3} of the CBR of pilot-based approaches.
\begin{table}[htb]
\centering
\setlength{\tabcolsep}{0.7pt}
\caption{Quantitative results on FFHQ-256 dataset and Rayleigh fading channels. $N_r = 64$, $N_t = 8$, $N_u = 1$, $T = 24$, SNR $=0$ dB. \textbf{Bold}: best \& \underline{Underline}: second-best among practical approaches.}
\label{massiveMIMO}
\begin{tabular}{l | c | c c c |c }
    \toprule
    \multirow{3}{*}{Methods} &  \multirow{3}{*}{CBR}&\multirow{3}{*}{MS-SSIM$\uparrow$} & \multirow{3}{*}{DISTS$\downarrow$} & \multirow{3}{*}{LPIPS$\downarrow$} & Channel \\
    &  && & & Estimation \\
    &  && & & NMSE$\downarrow$ \\
    \midrule
    BPG-Capacity + Perfect CSI &  0.0234&0.9653& 0.1718& 0.2208 & N/A \\
    DJSCC-MIMO + Perfect CSI &      0.0234  &0.7022& 0.2796& 0.3138 & N/A \\
    \midrule
    Blind-MIMOSC + PVD &  \textbf{0.0234}  &\textbf{0.9085}& \textbf{0.1398}& \textbf{0.2070} & \textbf{-4.7221} \\
    DPS-MIMO + $N_t$ pilots &  0.0352&0.4380& 0.4897& 0.7281 & -2.0845 \\
    DPS-MIMO + $2N_t$ pilots &      0.0703&0.4597& 0.4600& 0.6763 & -3.8367 \\
    DJSCC-MIMO + $2N_t$ pilots &      0.0703 &\underline{0.5437} & \underline{0.4140} & \underline{0.5223} & -3.8341 \\
    BPG-LDPC + $2N_t$ pilots &  0.0703 &N/A & N/A & N/A & -3.8345 \\
    BPG-LDPC + $2N_t$ pilots &  0.2109 &0.4953 & 0.4411 & 0.6138 & -3.8339 \\
    DJSCC + Pro-BiG-AMP &  0.0272 &0.0513 & 0.9915 & 0.9944 & -0.0242 \\
    \bottomrule
\end{tabular}
\end{table}


The impact of MIMO channel dimensions on the recovery performance of our Blind-MIMOSC is shown in Fig. \ref{massiveMIMOFig}, where $T=24$, $N_u=1$, and SNR $=0$ dB. The horizontal axis denotes the number of transmit antennas $N_t$, and the number of receive antennas is $N_r=8N_t$. As $N_t$ increases, all performance metrics degrade. The reasons are two-fold: first, the number of channel parameters to be estimated, i.e., $N_rN_tK=1536N_t$, increases linearly with $N_t$, where channel estimation becomes increasingly challenging in higher dimensions; second, since the channel in the $k$-th block is estimated based on signal $\boldsymbol{X}_k\in\mathbb{C}^{N_t\times T}$, as $N_t$ grows while $T$ is fixed, the corresponding linear inverse problem of channel estimation becomes more ill‑conditioned according to linear algebra,
which in turn affects the performance of source recovery. To further validate this phenomenon, we compare Blind-MIMOSC against an oracle‑assisted LMMSE channel estimation benchmark that assumes perfect knowledge of $\boldsymbol{X}$ and uses it as the pilot for channel estimation. The channel estimation NMSE of Blind-MIMOSC remains within $0.7$ dB of the oracle bound, indicating that 1) Blind-MIMOSC with PVD operates near the optimal regime, and 2) the observed performance loss is an inherent consequence of increasing the MIMO size with coherence time $T$ fixed.

\begin{figure}[htb]
    \centering
    \includegraphics[width=0.78\linewidth]{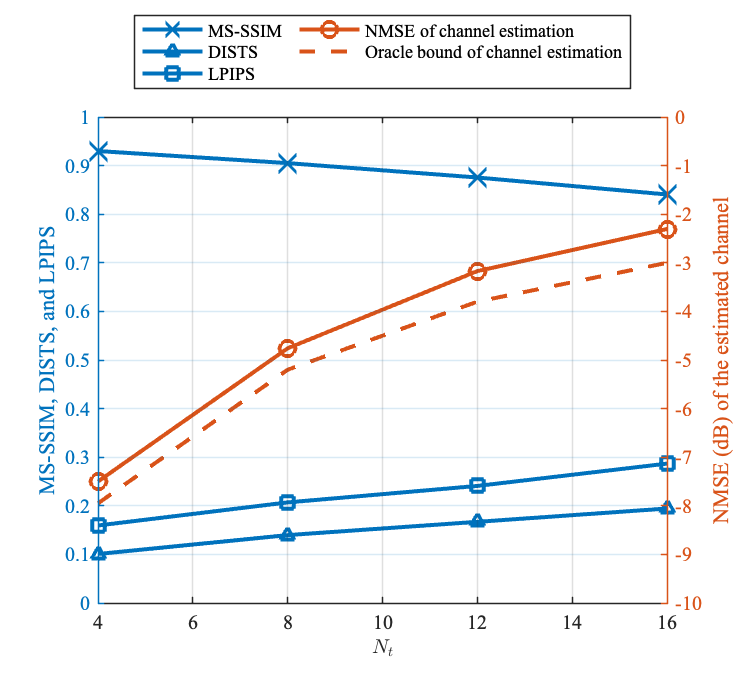}
    \caption{Impact of MIMO size on recovery performance with $T$ fixed.}
    \label{massiveMIMOFig}
\end{figure}

\subsection{CDL Channel Model}
\begin{figure*}[htb]
    \centering
    \setlength{\tabcolsep}{0.45pt} 
    \renewcommand{\arraystretch}{0} 
    \begin{tabular}{ccccccc}
        & 
        \multicolumn{1}{c}{\scriptsize \textbf{Blind-MIMOSC + }} & 
        \multicolumn{1}{c}{\scriptsize BPG-Capacity + } & 
        \multicolumn{1}{c}{\scriptsize BPG-LDPC + } & 
        \multicolumn{1}{c}{\scriptsize DJSCC-MIMO + } & 
        \multicolumn{1}{c}{\scriptsize DJSCC-MIMO + } & 
        \multicolumn{1}{c}{\scriptsize DPS-MIMO + } \\[2pt]

        \multicolumn{1}{c}{\scriptsize Ground truth} & 
        \multicolumn{1}{c}{\scriptsize \textbf{PVD}} & 
        \multicolumn{1}{c}{\scriptsize Perfect CSI}& 
        \multicolumn{1}{c}{\scriptsize $N_t$ pilots} & 
        \multicolumn{1}{c}{\scriptsize Perfect CSI} & 
        \multicolumn{1}{c}{\scriptsize $N_t$ pilots} & 
        \multicolumn{1}{c}{\scriptsize $N_t$ pilots} \\[2pt]
        
        \multicolumn{1}{c}{} & 
        \multicolumn{1}{c}{\scriptsize $\text{CBR}=0.023$} & 
        \multicolumn{1}{c}{\scriptsize $\text{CBR}=0.023$} & 
        \multicolumn{1}{c}{\scriptsize $\text{CBR}=0.211$ {\color{red}$\uparrow 800\%$}} & 
        \multicolumn{1}{c}{\scriptsize $\text{CBR}=0.023$} & 
        \multicolumn{1}{c}{\scriptsize $\text{CBR}=0.070$ {\color{red}$\uparrow 200\%$}} &
        \multicolumn{1}{c}{\scriptsize $\text{CBR}=0.070$ {\color{red}$\uparrow 200\%$}} \\[2pt]
        \includegraphics[width=0.141\linewidth]{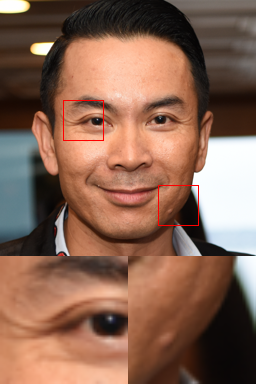} &
        \includegraphics[width=0.141\linewidth]{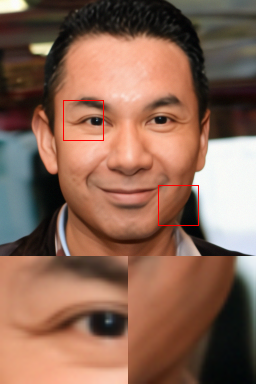} &
        \includegraphics[width=0.141\linewidth]{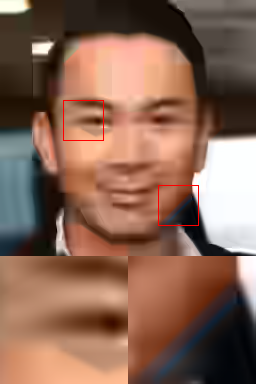} &
        \includegraphics[width=0.141\linewidth]{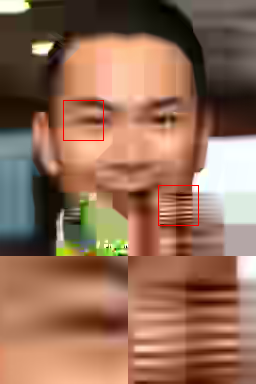} &
        \includegraphics[width=0.141\linewidth]{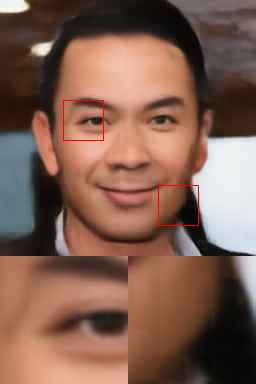} &
        \includegraphics[width=0.141\linewidth]{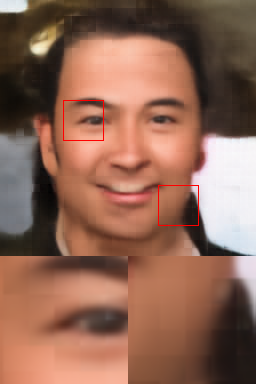} &
        \includegraphics[width=0.141\linewidth]{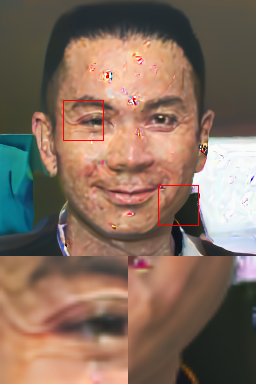}
    \end{tabular}
    \caption{Visual comparisons of image source under FFHQ-256 validation set and CDL-C dataset. Parameters are set as $N_r=128$, $N_u = 1$, $N_t = 16$, SNR $=0$ dB. The CDL-C channel remains constant for $T=24$ channel uses. 
    }
    \vspace{-0.3cm}
    \label{imagecompCDL}
\end{figure*}
In this experiment, the wireless channel is modeled using the clustered delay line (CDL) model defined in the 3GPP TR 38.901 standard for 5G New Radio (NR) systems.
The channel matrix for this experiment is generated using QuaDRiGa,
an open-source tool that implements the 3GPP CDL specifications. We use the CDL-C channel to evaluate the performance. 
Parameters are $N_r = 128$, $N_t = 16$, $N_u = 1$, $T=24$, and SNR $=0$ dB.  
The pilot-based channel estimation is given by the LMMSE estimation assuming the spatial covariance of the CDL-C channel is known.

The quantitative results on the FFHQ-256 validation set and the CDL-C channel dataset are listed in Table \ref{CDL}, and the visual comparisons of the recovered image source are shown in Fig. \ref{imagecompCDL}. It can be seen that Blind-MIMOSC with PVD maintains significant recovery performance with a drastically lower CBR.
We also test the case where the training and test sets of channels do not match, i.e., the channel scores are trained from the Rayleigh fading channel dataset. The results are shown in Table \ref{CDL} as Blind-MIMOSC + PVD$^*$ with two different CBRs, which maintains comparable source reconstruction quality to other methods in terms of DISTS and LPIPS, demonstrating the robustness of PVD. 
Regarding computational complexity, although PVD is faster than traditional diffusion models like DPS, it still takes approximately 20 seconds on our server to simultaneously recover the channel and the image source. Despite its non-negligible complexity, PVD represents a practical and promising solution for completely pilot-free channel-and-source recovery in MIMO communication systems.

\begin{table}[htb]
\setlength{\tabcolsep}{0.7pt}
\centering
\caption{Quantitative results on FFHQ-256 validation set and CDL channels. $N_r = 128$, $N_t = 16$, $N_u = 1$, $T=24$, SNR $=0$ dB. \textbf{Bold}: best \& \underline{Underline}: second-best among practical approaches.}
\label{CDL}
\begin{tabular}{l  |c  |c c c  |c }
    \toprule
    \multirow{3}{*}{Methods}  &\multirow{3}{*}{CBR}& \multirow{3}{*}{MS-SSIM$\uparrow$} & \multirow{3}{*}{DISTS$\downarrow$} & \multirow{3}{*}{LPIPS$\downarrow$} & Channel \\
     && & & & Estimation \\
     && & & & NMSE$\downarrow$ \\
    \midrule
    BPG-Capacity + Perfect CSI  &0.0234 & 0.8852 & 0.2853 & 0.4112 & N/A \\
    DJSCC-MIMO + Perfect CSI  &0.0234 & 0.9121 & 0.2314 & 0.2926 & N/A \\
    \midrule
    Blind-MIMOSC + PVD  &\textbf{0.0234} & \textbf{0.8870}& \textbf{0.1458}& \textbf{0.2353} & \textbf{-6.6141} \\
    Blind-MIMOSC + PVD$^*$ &0.0234& 0.5843 & 0.3546 & 0.4711 & -1.6494 \\
    Blind-MIMOSC + PVD$^*$ &0.0703& 0.7715 & \underline{0.2256} & \underline{0.3388} & -2.6216 \\
    DPS-MIMO + $N_t$ pilots  &0.0703& \underline{0.8392} & 0.3629 & 0.5620 & -4.2519 \\
    DJSCC-MIMO + $N_t$ pilots  &    0.0703& 0.7524 & 0.3256 & 0.4595 & -4.2555 \\
    BPG-LDPC + $N_t$ pilots  &0.0703& N/A & N/A & N/A & -4.2571 \\
    BPG-LDPC + $N_t$ pilots  &    0.2109& 0.6513 & 0.3913 & 0.5937 & -4.2566 \\
    DJSCC + Pro-BiG-AMP  &0.0256& 0.0512 & 0.9921 & 0.9915 & -0.0084 \\
    \bottomrule
\end{tabular}
\end{table}

\section{Conclusions}\label{conclusion}
In this paper, we presented a novel blind MIMO semantic communication framework, named Blind-MIMOSC, with a DJSCC transmitter and a diffusion-based blind receiver. We proposed the PVD model for the blind receiver to leverage the structural characteristics of the channel and the source data via diffusion models, thereby enabling simultaneous channel estimation and source recovery in block-fading MIMO systems without using any pilots. Extensive simulations under different MIMO channel conditions showed that the proposed framework significantly improves channel estimation accuracy, source recovery quality, and transmission efficiency compared to state-of-the-art approaches.
These results underscore the potential of the proposed Blind-MIMOSC framework with the PVD model to enhance the overall system performance in MIMO communications, and open new avenues for the application of generative AI in wireless communications.

There are still several promising research directions worthy of further exploration. For instance, how to significantly reduce the computational complexity of diffusion models during the inference phase is a critical issue that urgently needs to be addressed. In this direction, further research into more efficient generative models, such as flow matching, could be conducted to enhance the practicality of our Blind-MIMOSC framework.


\appendices
\section{Pilot-Based Approaches via Conditional Entropy Minimization}\label{pilotbased}
The conditional entropy $H\left(\boldsymbol{H}_0,\boldsymbol{D}_0\vert \boldsymbol{Y}\right)$ can be used as the objective function for pilot-based transceiver design to obtain the channel estimator $g_{\boldsymbol{\theta}_1}$ parameterized by $\boldsymbol{\theta}_1$ and JSCC codec $\{f_{\boldsymbol{\gamma}},g_{\boldsymbol{\theta}_2}\}$ parameterized by $\boldsymbol{\gamma}$ and $\boldsymbol{\theta}_2$. For instance, let us insert pilot symbols, denoted by $\boldsymbol{X}_p$, into transmitted signals, i.e., $\boldsymbol{X} = \left[\boldsymbol{X}_p, \boldsymbol{X}_d\right]$ where $\boldsymbol{X}_d$ is the data signal encoded from source $\boldsymbol{D}_0$. The received signal is $\boldsymbol{Y} = \left[\boldsymbol{Y}_p, \boldsymbol{Y}_d\right]$ with $\boldsymbol{Y}_p$ and $\boldsymbol{Y}_d$ being the received pilot signal and data signal, respectively. The conditional entropy is relaxed as
\begin{subequations}
\begin{align}
    &H\left(\boldsymbol{H}_0,\boldsymbol{D}_0\vert  \boldsymbol{Y}\right) \nonumber\\
    ={}& H\left(\boldsymbol{H}_0, \boldsymbol{D}_0\vert\boldsymbol{Y}_d,\boldsymbol{Y}_p,\boldsymbol{X}_p\right)\label{eq2:step1}\\ 
    \leq{}& H\left(\boldsymbol{H}_0, \boldsymbol{D}_0,\boldsymbol{X}\vert\boldsymbol{Y}_d,\boldsymbol{Y}_p,\boldsymbol{X}_p\right)\label{eq2:step2}\\ 
    ={}& H\left(\boldsymbol{H}_0\vert\boldsymbol{Y}_d,\boldsymbol{Y}_p,\boldsymbol{X}_p\right)\nonumber \\
    &+ H\left(\boldsymbol{D}_0,\boldsymbol{X}\vert\boldsymbol{Y}_d,\boldsymbol{Y}_p,\boldsymbol{X}_p,\boldsymbol{H}_0\right)\label{eq2:step3}\\
    \leq{}& H\left(\boldsymbol{H}_0\vert\boldsymbol{Y}_p,\boldsymbol{X}_p\right) + H\left(\boldsymbol{D}_0,\boldsymbol{X}_d\vert\boldsymbol{Y}_d,\boldsymbol{H}_0\right)\label{eq2:step4}\\
    \leq{}& \underbrace{\mathbb{E}_{p\left(\boldsymbol{H}_0,\boldsymbol{Y}_p,\boldsymbol{X}_p\right)}\left[-\ln q_{\boldsymbol{\theta}_1}\left(\boldsymbol{H}_0\vert\boldsymbol{Y}_p,\boldsymbol{X}_p\right)\right]}_{\text{Loss function of the channel estimator $g_{\boldsymbol{\theta}_1}$}}\nonumber\\
    &+\underbrace{\mathbb{E}_{p_{\boldsymbol{\gamma}}\left(\boldsymbol{D}_0,\boldsymbol{X}_d,\boldsymbol{Y}_d,\boldsymbol{H}_0\right)}\left[-\ln q_{\boldsymbol{\theta}_2}(\boldsymbol{D}_0,\boldsymbol{X}_d\vert\boldsymbol{Y}_d,\boldsymbol{H}_0)\right]}_{\text{Loss function of the JSCC codec $\{f_{\boldsymbol{\gamma}},g_{\boldsymbol{\theta}_2}\}$}},\label{eq2:step8}
\end{align}
\end{subequations}
where $q_{\boldsymbol{\theta}_1}(\boldsymbol{H}_0 \vert\boldsymbol{Y}_p,\boldsymbol{X}_p)$ and $q_{\boldsymbol{\theta}_2}(\boldsymbol{D}_0,\boldsymbol{X}_d\vert\boldsymbol{Y}_d,\boldsymbol{H}_0)$ are the variational approximations of the true \textit{a posteriori} distributions $p(\boldsymbol{H}_0 \vert\boldsymbol{Y}_p,\boldsymbol{X}_p)$ and $p_{\boldsymbol{\gamma}}(\boldsymbol{D}_0,\boldsymbol{X}_d\vert\boldsymbol{Y}_d,\boldsymbol{H}_0)$, respectively. The detailed explanations for each step are as follows. The equality in \eqref{eq2:step1} holds because $\boldsymbol{X}_p$ is deterministic given $\boldsymbol{Y}_p$. The inequality in \eqref{eq2:step2} is valid since introducing the intermediate variable $\boldsymbol{X}$ increases the joint conditional entropy. The equality in \eqref{eq2:step3} results from the chain rule of conditional entropy. The inequality in \eqref{eq2:step4} follows from the property that dropping off condition variables increases the uncertainty. 
Finally, the inequality in \eqref{eq2:step8} introduces two cross-entropies as the upper bounds of the two terms in \eqref{eq2:step4}, respectively. 
Thus, pilot-based two-stage approaches are obtained by minimizing the two cross-entropies sequentially, where the first cross-entropy is minimized to achieve the channel estimator, and the second cross-entropy is minimized to obtain the JSCC codec.
This separated optimization reduces the blind channel-and-source recovery scheme to pilot-based two-stage schemes, which is generally suboptimal due to the data processing inequality \cite{tse2005fundamentals} and the additional assumptions imposed on $\boldsymbol{H}_0$, $\boldsymbol{X}$, and $\boldsymbol{D}_0$ to achieve channel estimation and decoding. Pilot-based two-stage approaches such as the DPS-MIMO algorithm \cite{chung2023diffusion} and the DJSCC-MIMO algorithm \cite{wu2024deep} are included as baselines in our numerical comparisons in Section~\ref{NR}.

\bibliographystyle{IEEEtran}
\bibliography{Blind-MIMOSC.bib}

\end{document}